\pgfplotsset{compat=newest}
\renewcommand{\@Opargbegintheorem}[4]{%
  #4\trivlist\item[\hskip\labelsep{#3#2\@thmcounterend}]}
\let\originalleft\left
\let\originalright\right
\renewcommand{\left}{\mathopen{}\mathclose\bgroup\originalleft}
\renewcommand{\right}{\aftergroup\egroup\originalright}
\newcommand{\reffig}[1]{Fig.\thinspace\ref{#1}}
\newcommand{\lexlt}[0]{<_\mathit{lex}}
\newcommand{\lexgt}[0]{>_\mathit{lex}}
\newcommand{\SA}[0]{\texttt{SA}}
\newcommand{\nss}[0]{\texttt{nss}}
\newcommand{\pss}[0]{\texttt{pss}}
\newcommand{\ang}[1]{\langle#1\rangle}
\newcommand{\abs}[1]{\left|#1\right|}
\newcommand{\set}[1]{\left\{#1\right\}}
\newcommand{\bra}[1]{\left(#1\right)}
\newcommand{\intervCC}[2]{\left[#1 \, ..\,#2\right]}
\newcommand{\intervCO}[2]{\left[#1 \, ..\,#2\right)}
\newcommand{\intervOC}[2]{\left(#1 \, ..\,#2\right]}
\newcommand{\intervOO}[2]{\left(#1 \, ..\,#2\right)}
\newcommand{\G}[0]{\mathcal{G}}
\newcommand{\Ps}[0]{\mathcal{P}}
\newcommand{\Ls}[0]{\mathcal{L}}
\newcommand{\LIBSAIS}[0]{\texttt{libsais}}
\newcommand{\DIVSUFSORT}[0]{\texttt{DivSufSort}}
\newcommand{\FGSACA}[0]{\texttt{FGSACA}}
\newcommand{\DSH}[0]{\texttt{DSH}}
\newcommand{\DS}[0]{\texttt{DS1}}
\newcommand{\GSACA}[0]{\texttt{GSACA}}
\definecolor{c1}{RGB}{255,150,100}
\definecolor{c2}{RGB}{240,150,120}
\definecolor{c3}{RGB}{225,150,140}
\definecolor{c4}{RGB}{210,150,160}
\definecolor{c5}{RGB}{195,150,180}
\definecolor{c6}{RGB}{180,150,200}
\definecolor{c7}{RGB}{165,150,225}
\definecolor{c8}{RGB}{150,150,255}
\begin{document}
\title{On the Optimisation of the GSACA Suffix Array Construction Algorithm}
%
%
\author{Jannik Olbrich\orcidID{0000-0003-3291-7342}
	\and Enno Ohlebusch
	\and Thomas Büchler
	}
\authorrunning{J. Olbrich \and E. Ohlebusch \and T. Büchler}
%
\institute{University of Ulm, 89081 Ulm, Germany\\
\email{\{jannik.olbrich,enno.ohlebusch,thomas.buechler\}@uni-ulm.de}\\
\url{https://www.uni-ulm.de/in/theo}}
\maketitle              
\begin{abstract}
	The suffix array is arguably one of the most important data structures
	in sequence analysis and consequently there is a multitude of suffix sorting
	algorithms.
	However, to this date the \GSACA{} algorithm introduced in 2015
	is the only known non-recursive linear-time
	suffix array construction algorithm (SACA). Despite its
	interesting theoretical properties, there has been little effort in
	improving the algorithm's subpar real-world performance.
	There is a super-linear algorithm \DSH{} which relies on the same sorting
	principle and is faster than \DIVSUFSORT{}, the fastest SACA for
	over a decade.
	This paper is concerned with analysing the sorting principle used in \GSACA{}
	and \DSH{} and exploiting its properties in order to give an optimised
	linear-time algorithm.
	Our algorithm is not only significantly faster than
	\GSACA{} but also outperforms \DIVSUFSORT{} and \DSH{}.


\keywords{Suffix array \and suffix sorting \and string algorithms.}
\end{abstract}

\section{Introduction}

The \emph{suffix array} contains the indices of all suffixes of a string
arranged in lexicographical order. It is arguably one of the most important data
structures in \emph{stringology}, the topic of algorithms on strings and
sequences.
It was introduced in 1990 by Manber and Myers for on-line string
searches \cite{sa_intro} and has since been adopted in a wide area of
applications including text indexing and compression
\cite{bioinformaticsAlgorithms}.
Although the suffix array is conceptually very simple, constructing it
efficiently is not a trivial task.

When $n$ is the length of the input text, the suffix array can be constructed
in $\mathcal{O}(n)$ time and $\mathcal{O}(1)$ additional words of working
space when the alphabet is linearly-sortable
(i.e.\ the symbols in the string can be sorted in $\mathcal{O}(n)$ time)
	\cite{goto2017optimal,li2018optimal,nong2013practical}.
However, algorithms with these bounds are not always the fastest in practice.
For instance, \DIVSUFSORT{} has been the fastest SACA for over
a decade although having super-linear worst-case time complexity
\cite{bertram_et_al_2015,fischer2017dismantling}.
	To the best of our knowledge, the currently
	fastest suffix sorter is \LIBSAIS{}, which appeared as source code in 
	February 2021 on Github\footnote{
		\url{https://github.com/IlyaGrebnov/libsais}, last accessed: August 22, 2022
	} and has not been subject to peer review in
	any academic context.
	The author claims that \LIBSAIS{} is an improved implementation of the SA-IS
	algorithm and hence has linear time complexity \cite{nong2009linear}.

The only non-recursive linear-time suffix sorting algorithm \GSACA{} was
introduced in 2015 by Baier and is not competitive,
neither in terms of speed nor in the amount of memory consumed
\cite{baier2015linear,baier2016linear}.
Despite the new algorithm's entirely novel approach and interesting theoretical
properties \cite{franek2017linear}, there has been little effort in optimising it.
In 2021, Bertram et al.\ \cite{bertram_et_al_2015} provided a much faster SACA \DSH{} using the
same sorting principle as \GSACA{}. Their algorithm beats \DIVSUFSORT{}
in terms of speed, but also has super-linear time complexity.

\paragraph{Our Contributions}
We provide a linear-time SACA that relies on the same \emph{grouping} principle
that is employed by \DSH{} and \GSACA{}, but is faster than both.
This is done by exploiting certain properties of Lyndon words that are not
used in the other algorithms.
As a result, our algorithm is more than $11\%$ faster than \DSH{} on
real-world texts and at least $46\%$ faster than Baier's \GSACA{}
implementation.
	Although our algorithm is not on par with \LIBSAIS{} on real-world data, 
	it significantly improves Baier's sorting principle and
	positively answers the question whether the precomputed Lyndon array can be
	used to accelerate \GSACA{} (posed in \cite{bille2019space}).

The rest of this paper is structured as follows:
Section \ref{sec_preliminaries} introduces the definitions and notations used
throughout this paper.
In Section \ref{sec_gsaca}, the grouping principle is investigated and
a description of our algorithm is provided.
In Section \ref{sec_experiments} our algorithm is evaluated experimentally and
compared to other relevant SACAs.
Finally, Section \ref{sec_discussion} concludes this paper and provides an
outlook on possible future research.

\section{Preliminaries}
\label{sec_preliminaries}

For $i,j\in\mathbb{N}_0$ we denote the set $\set{k\in\mathbb{N}_0 : i \leq k \leq j}$ by the
interval notations $\intervCC{i}{j} = \intervCO{i}{j+1} = \intervOC{i-1}{j} = \intervOO{i-1}{j+1}$.
For an array $A$ we analogously denote the \emph{subarray} from $i$ to $j$ by
$A\intervCC{i}{j} = A\intervCO{i}{j+1} = A\intervOC{i-1}{j} = A\intervOO{i-1}{j+1} = A[i]A[i+1]\dots A[j]$.
We use zero-based indexing, i.e.\ the first entry of the array $A$ is $A[0]$.
A \emph{string} $S$ of \emph{length} $n$ over an \emph{alphabet} $\Sigma$ is a
sequence of $n$ characters from $\Sigma$. We denote the length $n$ of $S$ by
$\abs{S}$ and the $i$'th symbol of $S$ by $S[i-1]$, i.e.\ strings are zero-indexed.
Analogous to arrays we denote the \emph{substring} from $i$ to $j$ by
$S\intervCC{i}{j} = S\intervCO{i}{j+1} =
S\intervOC{i-1}{j} = S\intervOO{i-1}{j+1} = S[i]S[i+1]\dots S[j]$.
For $j > i$ we let $S\intervCC{i}{j}$ be the \emph{empty string} $\varepsilon$.
The \emph{suffix} $i$ of a string $S$ of length $n$ is the substring $S\intervCO{i}{n}$ and is denoted by $S_i$.
Similarly, the substring $S\intervCC{0}{i}$ is a \emph{prefix} of $S$. A suffix (prefix) is \emph{proper} if
$i>0$ ($i + 1 < n$).
	For two strings $u$ and $v$ and an integer $k\geq 0$ we let $uv$ be the
	concatenation of $u$ and $v$ and denote the $k$-times concatenation of $u$
	by $u^k$.

	We assume totally ordered alphabets.
	This induces a total order on strings.
Specifically, we say a string $S$ of length $n$ is \emph{lexicographically smaller} than another string $S'$ of length $m$ if and only if
there is some $\ell\leq\min\set{n,m}$ such that $S\intervCO{0}{\ell} = S'\intervCO{0}{\ell}$ and either $n=\ell<m$ or $S[\ell] < S'[\ell]$.
If $S$ is lexicographically smaller than $S'$ we write $S\lexlt S'$.

A non-empty string $S$ is a \emph{Lyndon word} if and only if $S$ is lexicographically smaller than
all its proper suffixes \cite{duval1983}.
The \emph{Lyndon prefix} of $S$ is the longest prefix of $S$ that is a Lyndon word. We let $\Ls_i$ denote the
Lyndon prefix of $S_i$. Note that a string of length one is always a Lyndon
word, hence the Lyndon prefix of a non-empty string is also non-empty.

	In the remainder of this paper, we assume an arbitrary but fixed string $S$ of
	length $n>1$ over a totally ordered alphabet $\Sigma$ with
	$\abs\Sigma\in\mathcal{O}(n)$.
Furthermore, we assume w.l.o.g.\ that $S$ is \emph{null-terminated}, that is $S[n-1] = \$$ and $S[i] > \$$ for all
$i\in\intervCO{0}{n-1}$.

The \emph{suffix array} $\SA$ of $S$ is an array of length $n$ that contains the indices of the suffixes of $S$
in increasing lexicographical order. That is, $\SA$ forms a permutation of $\intervCO{0}{n}$ and
$S_{\SA[0]} \lexlt S_{\SA[1]} \lexlt \dots \lexlt S_{\SA[n-1]}$.

We assume the RAM model of computation, that is, basic arithmetic operations can be performed in $\mathcal{O}(1)$ time on words of length $\mathcal{O}(\log n)$, where $n$ is the size of the input.
Reading and writing an entry $A[i]$ of an array $A$ can also be performed in constant time when $i$ and $A[i]$ have length in $\mathcal{O}(\log n)$.

\begin{definition}[\pss-tree \cite{bille2019space}]
	Let $\pss$ be the array such that $\pss[i]$
	is the index of the previous smaller suffix for each $i\in\intervCO{0}{n}$ (or -1 if none exists).
	Formally, $\pss[i] := \max{\bra{\set{j\in\intervCO{0}{i} : S_j \lexlt S_i}\cup\set{-1}}}$.
	Note that $\pss$ forms a tree with -1 as the root, in which each $i\in\intervCO{-1}{n}$
	is represented by a node and $\pss[i]$ is the parent of node $i$. We call
	this tree the \emph{\pss-tree}.
	Further, we impose an order on the nodes that corresponds to the order of
	the indices represented by the nodes.
	In particular, if $c_1<c_2<\dots<c_k$ are the children of $i$ (i.e. $\pss[c_1]=\dots=\pss[c_k]=i$), we say $c_k$ is the \emph{last child of $i$}.
	\label{def_pss_tree}
\end{definition}

\newcommand{\runningexample}[0]{\texttt{acedcebceece\$}}

\begin{figure}[t]
	\centering
	\begin{tikzpicture}[
				cyc/.style={circle,draw=black,thick,minimum size=12pt},
			]
		\draw[fill=c7] (-0.3,-0.3) rectangle ++(0.6+0.75*11,0.6);

		\draw[fill=c6] (-0.225+0.75*6,-0.225) rectangle ++(0.45+0.75*5,0.45);
		
		\draw[fill=c5] (-0.225+0.75*4,-0.225) rectangle ++(0.45+0.75*1,0.45);
		\draw[fill=c5] (-0.175+0.75*10,-0.175) rectangle ++(0.35+0.75*1,0.35);
		
		\draw[fill=c4] (-0.225+0.75*1,-0.225) rectangle ++(0.45+0.75*2,0.45);
		
		\draw[fill=c3] (-0.175+0.75*7,-0.175) rectangle ++(0.35+0.75*2,0.35);

		\foreach \c/\col [count=\i from 0] in {a/,c/,e/c1,d/c2,c/,e/c1,b,c,e/c1,e/c1,c,e/c1,\$/c8} {
			\ifthenelse{\i=0 \OR \i=1 \OR \i=4 \OR \i=6 \OR \i=7 \OR \i=10}{ }{
				\draw [fill=\col] (-0.125+0.75*\i,-0.125) rectangle ++(0.25,0.25);
			}
			\node (s\i) at (0.75*\i,0){\strut\texttt{\c}};
			\node (i\i) at (0.75*\i,0.75){\strut\i};
		}

		\foreach \s [count=\i from 0] in {12,0,6,10,4,1,7,3,11,5,2,9,8} {
			\node (sa\i) at (0.75*\i,-0.75){\strut\s};
		}
		\node[anchor=west] (sa) at (-1.2,-0.75) {\strut $\SA$};

		\foreach \s [count=\i from 0] in {12, 4,3,4, 6,6, 12,10,9,10,12,12,13} {
			\node (nss\i) at (0.75*\i,-1.25){\strut\s};
		}
		\node[anchor=west] (nss) at (-1.2,-1.25) {\strut $\nss$};

		\foreach \s [count=\i from 0] in {-1,0,1,1, 0,4, 0,6,7,7, 6,10, -1} {
			\node (pss\i) at (0.75*\i,-1.75){\strut\s};
		}
		\node[anchor=west] (pss) at (-1.2,-1.75) {\strut $\pss$};

		\foreach \y [count=\i from -1] in {-1,0,1,2,2,1,2,1,2,3,3,2,3,0}{
			\node (t\i) at (\i*0.75,-\y*0.5-2.85){\strut \i};
			\node[cyc] (n\i) at (t\i) {};
		}
		\foreach \p [count=\i from 0] in {n-1,n0,n1,n1,n0,n4,n0,n6,n7,n7,n6,n10,n-1}
			\draw[->,thick] (n\i) |- (\p);
	\end{tikzpicture}
	\caption{
		Shown are the Lyndon prefixes of all suffixes of
		$S=\runningexample{}$ and the corresponding suffix array, $\nss$-array,
		$\pss$-array and $\pss$-tree.
		Each box indicates a Lyndon prefix. For instance, the Lyndon prefix of
		$S_7=\texttt{ceece\$}$ is
		$\Ls_7 = \texttt{cee}$. Note that $\Ls_i$ is exactly $S[i]$
		concatenated with the Lyndon prefixes of $i$'s children in the
		$\pss$-tree (see Lemma \ref{lemma_lyndon_childrens}).
		For example, $\Ls_6 = S[6]\Ls_7\Ls_{10} = \texttt{bceece}$.
	}
	\label{fig_Lyndon_pss}
\end{figure}

Analogous to $\pss[i]$, we define $\nss[i] := \min{\set{j\in\intervOC{i}{n}
: S_j\lexlt S_i}}$ as the next smaller suffix of $i$. Note that
$S_n=\varepsilon$ is smaller than any non-empty suffix of $S$, hence $\nss$ is
well-defined.

In the rest of this paper, we use $S=\runningexample{}$ as our running example. 
\reffig{fig_Lyndon_pss} shows its Lyndon prefixes and the corresponding $\pss$-tree.
\begin{definition}
	Let $\Ps_i$ be the set of suffixes with $i$ as next smaller suffix, that is
	\[
		\Ps_i = \set{j\in\intervCO0i : \nss[j] = i}
	\]
\end{definition}
For instance, in our running example we have $\Ps_{4} = \set{1,3}$ because
$\nss[1] = \nss[3] = 4$.

\section{GSACA}
\label{sec_gsaca}

We start by giving a high level description of the sorting principle based on grouping 
by Baier \cite{baier2015linear,baier2016linear}. Very basically, the suffixes are first
assigned to lexicographically ordered groups, which are then refined until the
suffix array emerges.
The algorithm consists of the following steps.

\paragraph{Initialisation:}
Group the suffixes according to their first character.
\paragraph{Phase I:}
Refine the groups until the elements in each group have the same Lyndon prefix.
\paragraph{Phase II:}
Sort elements within groups lexicographically.

\begin{definition}[Suffix Grouping, adapted from \cite{bertram_et_al_2015}]
	Let $S$ be a string of length $n$ and $\SA$ the corresponding suffix array.
	A \emph{group} $\G$ with \emph{group context} $\alpha$ is a tuple
	$\ang{g_s,g_e,\abs{\alpha}}$ with \emph{group start} $g_s\in\intervCO{0}{n}$
	and \emph{group end} $g_e\in\intervCO{g_s}{n}$ such that the following
	properties hold:
	\begin{enumerate}
		\item All suffixes in $\SA\intervCC{g_s}{g_e}$ share the prefix
			$\alpha$, i.e.\ for all $i\in\SA\intervCC{g_s}{g_e}$ it holds
			$S_i=\alpha S_{i+\abs\alpha}$.
		\item $\alpha$ is a Lyndon word.
	\end{enumerate}
	We say \emph{$i$ is in $\G$} or \emph{$i$ is an element of $\G$} and write $i\in\G$ if and only if $i\in\SA\intervCC{g_s}{g_e}$.
	A \emph{suffix grouping} for $S$ is a set of groups
	$\G_1,\dots,\G_m$, where the groups are pairwise disjoint
	and cover the entire suffix array. Formally, if
	$\G_i = \ang{g_{s,i},g_{e,i},\abs{\alpha_i}}$ for all $i$, then $g_{s,1}
	= 0, g_{e,m} = n-1$ and $g_{s,j} = 1 + g_{e,j-1}$ for all $j\in\intervCC{2}{m}$.
	For $i,j\in\intervCC 1 m$, $\G_i$ is a \emph{lower} \emph{(higher)} group than $\G_j$ if and only if $i<j$ ($i>j$).
	If all elements in a group $\G$ have $\alpha$ as their Lyndon prefix then
	$\G$ is a \emph{Lyndon group}.
	If $\G$ is not a Lyndon group, it is called \emph{preliminary}. Furthermore, a
	suffix grouping is \emph{Lyndon} if all its groups are Lyndon
	groups, and \emph{preliminary} otherwise.
	\label{def_suffix_grouping}
\end{definition}

With these notions, a suffix grouping is created in the initialisation, which is then
refined in Phase I until it is a Lyndon grouping,
and further refined in Phase II until the suffix array emerges.
\reffig{fig_example_grouping} shows a Lyndon grouping with contexts of our running
example.

\begin{figure}
	\newcommand{\brac}[5]{\draw[decorate,decoration={brace,amplitude=5pt}] (#1) -- (#2) node (#5) [midway,yshift=#3] {#4}}

	\centering
		\begin{tikzpicture}

			\foreach \c [count=\i from 0] in {12, 0, 6, 4,10, 1, 7, 3, 2,5,8,9,11} {
				\node[rectangle,rounded corners,minimum size=15pt] (a\i) at (0.75*\i,-2.5){};
				\node (b\i) at (a\i){\strut{\c}};
			}
			\brac{a0.north west}{a0.north east}{+10pt}{\strut\color{c8}\texttt{\$}}{g1};
			\brac{a1.south east}{a1.south west}{-10pt}{\strut\color{c7}\texttt{acedcebceece}}{g2};
			\brac{a2.north west}{a2.north east}{+10pt}{\strut\color{c6}\texttt{bceece}}{g3};
			\brac{a3.north west}{a4.north east}{+10pt}{\strut\color{c5}\texttt{ce}}{g4};
			\brac{a5.north west}{a5.north east}{+10pt}{\strut\color{c4}\texttt{ced}}{g5};
			\brac{a6.north west}{a6.north east}{+10pt}{\strut\color{c3}\texttt{cee}}{g6};
			\brac{a7.north west}{a7.north east}{+10pt}{\strut\color{c2}\texttt{d}}{g7};
			\brac{a8.north west}{a12.north east}{+10pt}{\strut\color{c1}\texttt{e}}{g8};

			\node (g1l) [above=-2pt of g1] {$\mathcal{G}_1$};
			\path let \p1 = (g1l) in let \p2 = (g2) in node (g2l) at (\x2,\y1) {$\mathcal{G}_2$};
			\node (g3l) [above=-2pt of g3] {$\mathcal{G}_3$};
			\node (g4l) [above=-2pt of g4] {$\mathcal{G}_4$};
			\node (g5l) [above=-2pt of g5] {$\mathcal{G}_5$};
			\node (g6l) [above=-2pt of g6] {$\mathcal{G}_6$};
			\node (g7l) [above=-2pt of g7] {$\mathcal{G}_7$};
			\node (g8l) [above=-2pt of g8] {$\mathcal{G}_8$};
		\end{tikzpicture}
	\caption{
		A Lyndon grouping of $\runningexample{}$ with group
		contexts.
	}
	\label{fig_example_grouping}
\end{figure}

In Subsections \ref{subsec_phase_II} and \ref{subsec_phase_I} we explain Phases
II and I, respectively, of our suffix array construction algorithm. Phase II is
described first because it is much simpler.
Subsection \ref{subsec_init} describes how the data structures needed for
Phase I are set up.

\subsection{Phase II}
\label{subsec_phase_II}

In Phase II we need to refine the Lyndon grouping obtained in Phase I into the
suffix array. Let $\G$ be a Lyndon group with context $\alpha$ and let $i,j\in\G$.
Since $S_i=\alpha S_{i+\abs{\alpha}}$ and $S_j=\alpha S_{j+\abs{\alpha}}$,
we have $S_i \lexlt S_j$ if and only if
$S_{i+\abs{\alpha}} \lexlt S_{j+\abs{\alpha}}$.
Hence, in order to find the lexicographically smallest suffix in $\G$, it
suffices to find the lexicographically smallest suffix $p$ in
$\set{i+\abs{\alpha} : i\in\G}$.
Note that removing $p-\abs\alpha$ from $\G$ and inserting it into a new
group immediately preceding $\G$ yields a valid Lyndon grouping.
We can repeat this process until each element in $\G$ is in its own singleton
group.
As $\G$ is Lyndon, we have $S_{k+\abs{\alpha}}\lexlt S_k$ for each $k\in\G$.
Therefore, if all groups lower than $\G$ are singletons, $p$ can be
determined by a simple scan over $\G$ (by determining which member of
$\set{i+\abs\alpha : i\in\G}$
is in the lowest group).
Consider for instance $\G_4=\ang{3,4,\abs{\texttt{ce}}}$ from \reffig{fig_example_grouping}. We consider $4+\abs{\texttt{ce}} = 6$ and
$10+\abs{\texttt{ce}}=12$. Among them,
$12$ belongs to the lowest group, hence $S_{10}$ is lexicographically smaller than $S_4$.
Thus, we know that $\SA[3] = 10$, remove $10$ from $\G_4$ and repeat the same process with
the emerging group $\G_4'=\ang{4,4,\abs{\texttt{ce}}}$. As $4$ is the only element of $\G_4'$ we
know that $\SA[4] = 4$.

If we refine the groups in lexicographically increasing order (lower to
higher) as just described, each time a group $\G$ is processed, all groups lower
than $\G$ are singletons.
(Note that we assume $S$ to be nullterminated, thus the
lexicographically smallest group is always known to be the singleton containing
$n-1$.)
However, sorting groups in such a way leads to a superlinear time complexity.
Bertram et al.\ \cite{bertram_et_al_2015} provide a fast-in-practice $\mathcal{O}\bra{n\log n}$ algorithm for this,
broadly following the described approach.

In order to get a linear time complexity, we turn this approach on its head like
Baier does \cite{baier2015linear,baier2016linear}:
Instead of repeatedly finding the next smaller suffix in a group, we
consider the suffixes in lexicographically increasing order and for each encountered suffix
$i$, we move all suffixes that have $i$ as the next smaller suffix (i.e.\ those
in $\Ps_i$) to new
singleton groups immediately preceding their respective old groups as described above.
Corollary \ref{cor_different_groups} implies that this procedure is well-defined.
(Intuitively, all suffixes in $\Ps_i$ have different Lyndon prefixes because
those Lyndon prefixes start at different indices but end at the same index $i$,
hence they must be in different Lyndon groups.)
\begin{lemma}
	\label{lemma_Ps_diff}
	For any $j,j'\in\Ps_i$ we have $\Ls_j \neq \Ls_{j'}$ if and only if $j \neq j'$.
\end{lemma}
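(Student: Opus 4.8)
The plan is to prove the two directions of the biconditional separately, where the backward direction is trivial and the forward direction carries all the content. Recall that every $j\in\Ps_i$ satisfies $\nss[j]=i$, meaning $i$ is the smallest index greater than $j$ with $S_i\lexlt S_j$. The key observation I would exploit is that the Lyndon prefix $\Ls_j$ of $S_j$ extends exactly up to but not including the first index at which a lexicographically smaller suffix begins; that is, I expect $\abs{\Ls_j} = \nss[j]-j = i-j$, so that $\Ls_j = S\intervCO{j}{i}$. Establishing this identification between the Lyndon prefix and the interval $\intervCO{j}{i}$ is the crux of the argument.

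\medskip

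For the backward direction, if $j=j'$ then trivially $\Ls_j=\Ls_{j'}$, so $\Ls_j\neq\Ls_{j'}$ forces $j\neq j'$. For the forward direction, suppose $j,j'\in\Ps_i$ with $j\neq j'$; without loss of generality $j<j'$. I would argue that $\Ls_j = S\intervCO{j}{i}$ and $\Ls_{j'} = S\intervCO{j'}{i}$, both ending at the common index $i$ but starting at the distinct indices $j$ and $j'$. Consequently $\abs{\Ls_j} = i-j \neq i-j' = \abs{\Ls_{j'}}$, and since Lyndon prefixes of different lengths cannot be equal, $\Ls_j\neq\Ls_{j'}$. This matches the intuition stated in the parenthetical remark before the lemma.

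\medskip

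The main obstacle is rigorously proving that $\Ls_j = S\intervCO{j}{i}$ for $j\in\Ps_i$, i.e.\ that the Lyndon prefix terminates precisely at the next smaller suffix. I would show two inclusions. First, $S\intervCO{j}{i}$ is a Lyndon word: every proper suffix $S\intervCO{k}{i}$ with $j<k<i$ corresponds to a suffix $S_k$ that is \emph{not} smaller than $S_j$ (since $i=\nss[j]$ is the first smaller suffix after $j$), and this should imply $S\intervCO{j}{i}\lexlt S\intervCO{k}{i}$ using the Lyndon characterisation. Second, $S\intervCO{j}{i}$ is maximal: extending to include $S[i]$ fails because $S_i\lexlt S_j$, so the extended prefix $S\intervCO{j}{i+1}$ has a proper suffix (starting at $i$) that is lexicographically no larger, violating the Lyndon property. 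I would expect to need the standard fact that a string is a Lyndon word iff it is strictly smaller than all its proper suffixes (given in the preliminaries), together with careful handling of the interaction between suffix comparisons $S_k$ vs.\ $S_j$ and substring comparisons $S\intervCO{k}{i}$ vs.\ $S\intervCO{j}{i}$, where the shared right endpoint $i$ and the fact that nothing smaller than $S_j$ appears in $\intervOO{j}{i}$ should let me transfer the order.
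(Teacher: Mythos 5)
Your main line of argument coincides with the paper's: both proofs hinge on the identity $\Ls_j = S\intervCO{j}{\nss[j]}$, after which the claim is immediate because $\Ls_j$ and $\Ls_{j'}$ end at the same index $i$ but start at distinct indices, hence have different lengths. The paper simply invokes this identity as known (it later uses it ``by definition'' in the proof of Lemma \ref{lemma_lyndon_childrens}), whereas you attempt to prove it from scratch. Your argument for the Lyndon-word half is essentially sound once the order transfer is spelled out: if $S\intervCO{k}{i}$ were a prefix of $S\intervCO{j}{i}$ for some $k\in\intervOO{j}{i}$, then writing $m = j + (i-k)\in\intervOO{j}{i}$, the fact $S_j\lexlt S_k$ would force $S_m \lexlt S_i \lexlt S_j$, contradicting $\nss[j]=i$; in every other case the first mismatch lies inside both substrings and the order transfers verbatim.

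The genuine gap is in your maximality step. You only rule out the prefix $S\intervCO{j}{i+1}$, i.e.\ the extension by a single character, but the Lyndon prefix is defined as the \emph{longest} Lyndon prefix, and Lyndonness of prefixes is not monotone: in \texttt{ababb}, the prefixes \texttt{aba} and \texttt{abab} are not Lyndon words, yet \texttt{ababb} is. So showing that the length-$(i-j+1)$ prefix fails never by itself excludes longer prefixes. To close the gap you must show that $S\intervCO{j}{\ell}$ fails to be Lyndon for \emph{every} $\ell\in\intervOC{i}{n}$. The same idea does work uniformly: since $S_i\lexlt S_j$ and, by null-termination, no suffix of $S$ is a prefix of another, there is a first position $p$ with $S[i+p]<S[j+p]$. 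If $p < \ell - i$, this mismatch also decides the substring comparison and gives $S\intervCO{i}{\ell}\lexlt S\intervCO{j}{\ell}$; otherwise $S\intervCO{i}{\ell}$ is a proper prefix of $S\intervCO{j}{\ell}$ and hence again lexicographically smaller. Either way the proper suffix of $S\intervCO{j}{\ell}$ starting at relative position $i-j$ is smaller than $S\intervCO{j}{\ell}$ itself, so $S\intervCO{j}{\ell}$ is not a Lyndon word. With this repair your proposal becomes a correct, self-contained proof of the lemma (indeed more self-contained than the paper's, which assumes the crux identity).
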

\begin{proof}
	Let $j,j'\in\Ps_i$ and $j\neq j'$. By definition of $\Ps_i$ we have
	$\nss[j] = \nss[j'] = i$. Since $\Ls_j = S\intervCO{j}{\nss[j]}$ and
	$\Ls_{j'} = S\intervCO{j'}{\nss[j']}$, $\Ls_j$ and $\Ls_{j'}$ have different
	lengths, implying the claim.
\end{proof}
\begin{corollary}
	In a Lyndon grouping, the elements of $\Ps_i$ are in different groups.
	\label{cor_different_groups}
\end{corollary}

Accordingly, Algorithm \ref{alg_phaseII_basic} correctly computes the suffix
array from a Lyndon grouping. A formal proof of correctness is given in \cite{baier2015linear,baier2016linear}.
\reffig{fig_example_phase_II} shows Algorithm \ref{alg_phaseII_basic}
applied to our running example.
\begin{algorithm}[t]
	$A[0] \gets n-1$\;
	\For{$i = 0 \to n-1$}{
		\For{$j\in\Ps_{A[i]}$}{
			Let $k$ be the start of the group containing $j$\;
			remove $j$ from its current group and put it in a new group
			$\ang{k,k,\abs{\Ls_j}}$
			immediately preceding $j$'s old group\;
			$A[k] \gets j$\;
		}
	}
	\caption{Phase II of \GSACA{} \cite{baier2015linear,baier2016linear}}
	\label{alg_phaseII_basic}
\end{algorithm}

\begin{figure}[t]
	\centering
	\begin{tikzpicture}
		\definecolor{marked}{RGB}{170,225,170}

		\tikzset{markednode/.style={rectangle,rounded corners,fill=marked,minimum size=15pt}}
		\newcommand{\drawline}[1]{\draw[thick] (0.75*#1-0.375,\H-1em) -- (0.75*\i-0.375,\H+1em)}
		
		\newdimen\Hskip
		\tikzmath{\Hskip=1.2em;}

		\newdimen\H
		\tikzmath{\H=0cm;}

		\foreach \c [count=\i from 0] in {12, 0, 6, 4,10, 1, 7, 3, 2,5,8,9,11} {
			\ifthenelse{\i=0}{
				\node[markednode] at (0.75*\i,\H) {};
			}{}
			\node (a\i) at (0.75*\i,\H){\strut{\c}};
		}
		\foreach \i in {1,2,3,5,6,7,8} { \drawline{\i}; }
		
		\pgfextracty{\H}{\pgfpointanchor{a0}{south}}
		\tikzmath{\H=\H+1em-\Hskip;}
		
		\newdimen\W
		\tikzmath{
			coordinate \C;
			\C=(a12.east)-(a0.west);
			\W=\Cx;
		}
		\tikzset{textnode/.style={anchor=north west,text width=\W,inner sep=0pt}}

		\node[textnode] (t) at (-7.5pt,\H) {
			Since $S$ is nullterminated, $\SA[0] = n-1 = 12$. Hence we insert $\Ps_{12}=\set{0,6,10,11}$.
		};

		\pgfextracty{\H}{\pgfpointanchor{t}{south}}
		\tikzmath{\H=\H-\Hskip;}
		
		\foreach \c [count=\i from 0] in {12, 0, 6, 10,4, 1, 7, 3, 11,2,5,8,9} {
			\ifthenelse{\i<4 \OR \i=8}{
				\node[markednode] at (0.75*\i,\H) {};
			}{}
			\node (a\i) at (0.75*\i,\H){\strut{\c}};
		}
		\foreach \i in {1,2,3,4,5,6,7,8,9} { \drawline{\i}; }
		
		\pgfextracty{\H}{\pgfpointanchor{a0}{south}}
		\tikzmath{\H=\H+1em-\Hskip;}

		\node[textnode] (t) at (-7.5pt,\H) {
			We skip $\SA[1]$ since $\Ps_0=\emptyset$. Thus, $\Ps_6 = \set{4,5}$
			is inserted next.
		};

		\pgfextracty{\H}{\pgfpointanchor{t}{south}}
		\tikzmath{\H=\H-\Hskip;}
		
		\foreach \c [count=\i from 0] in {12, 0, 6, 10,4, 1, 7, 3, 11,5,2,8,9} {
			\ifthenelse{\i<5 \OR \i=8 \OR \i=9}{
				\node[markednode] at (0.75*\i,\H) {};
			}{}
			\node (a\i) at (0.75*\i,\H){\strut{\c}};
		}
		\foreach \i in {1,2,3,4,5,6,7,8,9,10} { \drawline{\i}; }

		\pgfextracty{\H}{\pgfpointanchor{a0}{south}}
		\tikzmath{\H=\H+1em-\Hskip;}

		\node[textnode] (t) at (-7.5pt,\H) {
			Next we have $\Ps_{10} = \set{7,9}$.
		};

		\pgfextracty{\H}{\pgfpointanchor{t}{south}}
		\tikzmath{\H=\H-\Hskip;}
		
		\foreach \c [count=\i from 0] in {12, 0, 6, 10,4, 1, 7, 3, 11,5,9,2,8} {
			\ifthenelse{\i<5 \OR \i=6 \OR \i=8 \OR \i=9 \OR \i=10}{
				\node[markednode] at (0.75*\i,\H) {};
			}{}
			\node (a\i) at (0.75*\i,\H){\strut{\c}};
		}
		\foreach \i in {1,2,3,4,5,6,7,8,9,10,11} { \drawline{\i}; }

		\pgfextracty{\H}{\pgfpointanchor{a0}{south}}
		\tikzmath{\H=\H+1em-\Hskip;}

		\node[textnode] (t) at (-7.5pt,\H) {
			Next we have $\Ps_{4} = \set{1,3}$.
		};

		\pgfextracty{\H}{\pgfpointanchor{t}{south}}
		\tikzmath{\H=\H-\Hskip;}
		
		\foreach \c [count=\i from 0] in {12, 0, 6, 10,4, 1, 7, 3, 11,5,9,2,8} {
			\ifthenelse{\i<11}{
				\node[markednode] at (0.75*\i,\H) {};
			}{}
			\node (a\i) at (0.75*\i,\H){\strut{\c}};
		}
		\foreach \i in {1,2,3,4,5,6,7,8,9,10,11} { \drawline{\i}; }

		\pgfextracty{\H}{\pgfpointanchor{a0}{south}}
		\tikzmath{\H=\H+1em-\Hskip;}

		\node[textnode] (t) at (-7.5pt,\H) {
			The only remaining nonempty $\Ps_i$ are $\Ps_3=\set{2}$ and
			$\Ps_9 = \set{8}$, which are considered in that order. Inserting them
			gives the suffix array.
		};

		\pgfextracty{\H}{\pgfpointanchor{t}{south}}
		\tikzmath{\H=\H-\Hskip;}

		\foreach \c [count=\i from 0] in {12,0,6,10,4,1,7,3,11,5,9,2,8} {
			\node[markednode] at (0.75*\i,\H) {};
			\node (a\i) at (0.75*\i,\H){\strut{\c}};
			\ifthenelse{\i>0}{ \drawline{\i}; }{}
		}
	\end{tikzpicture}
	\caption{
		Refining a Lyndon grouping for $S=\runningexample{}$ (see
		\reffig{fig_example_grouping}) into the suffix array, as done in
		Algorithm \ref{alg_phaseII_basic}.
		Inserted elements are colored green.
	}
	\label{fig_example_phase_II}
\end{figure}

Note that each element $i\in\intervCO{0}{n-1}$ has exactly one next smaller suffix,
hence there is exactly one $j$ with $i\in\Ps_j$ and thus $i$ is inserted exactly once into a new singleton group in Algorithm
\ref{alg_phaseII_basic}.
Therefore, for each group from the Lyndon grouping obtained in Phase I, it suffices to maintain
a single pointer to the current start of this group.
In \cite{baier2015linear}, these pointers are stored at the end of
each group in $A$. This leads to them being scattered in memory, potentially harming
cache performance. Instead, we store them contiguously in a separate array $C$,
which improves cache locality especially when there are few groups.

Besides this minor point, there are two major differences between our Phase II and Baier's, both
are concerned with the iteration over a $\Ps_i$-set.

The first difference is the way in which we determine the elements of $\Ps_i$ for some $i$.
The following observations immediately enable us to iterate over $\Ps_i$.
\begin{lemma}
	\label{lemma_prev_set_pred}
	$\Ps_i$ is empty if and only if $i = 0$ or $S_{i-1} \lexlt S_i$.
	Furthermore, if $\Ps_i\neq\emptyset$ then $i-1\in \Ps_i$.
\end{lemma}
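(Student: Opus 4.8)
The plan is to establish the ``Furthermore'' clause first, since both directions of the stated equivalence fall out of it almost immediately. Concretely, I would first show that $\Ps_i\neq\emptyset$ forces $i-1\in\Ps_i$ — that is, $i-1$ is always among the witnesses — and then read off the characterisation of emptiness from the observation that membership of $i-1$ in $\Ps_i$ is governed solely by the comparison between $S_{i-1}$ and $S_i$.

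To prove that $\Ps_i\neq\emptyset$ implies $i-1\in\Ps_i$, I would take an arbitrary $j\in\Ps_i$, so that $j\in\intervCO0i$ and $\nss[j]=i$. If $j=i-1$ there is nothing left to do, so assume $j<i-1$. Unfolding the definition of $\nss$, the fact that $i$ is the \emph{smallest} index exceeding $j$ with $S_i\lexlt S_j$ means that every index $k$ with $j<k<i$ satisfies $S_j\lexlt S_k$ (the suffixes being pairwise distinct, failure of $S_k\lexlt S_j$ gives $S_j\lexlt S_k$). Applying this to $k=i-1$ yields $S_j\lexlt S_{i-1}$, and combining it with $S_i\lexlt S_j$ gives $S_i\lexlt S_{i-1}$ by transitivity. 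Since $i$ is the immediate successor of $i-1$, there is no index strictly between them, so $S_i\lexlt S_{i-1}$ forces $\nss[i-1]=i$, i.e.\ $i-1\in\Ps_i$.

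With this in hand the equivalence is quick. For $i=0$ the index set $\intervCO{0}{0}$ is empty, so $\Ps_0=\emptyset$, matching the first disjunct. For $i\geq1$ I would argue through the single index $i-1$ in both directions: if $S_i\lexlt S_{i-1}$, then (since $i$ immediately follows $i-1$) $\nss[i-1]=i$ and hence $i-1\in\Ps_i$, so $\Ps_i\neq\emptyset$; conversely, if $\Ps_i\neq\emptyset$, the clause just proved puts $i-1\in\Ps_i$, so $\nss[i-1]=i$, which in particular requires $S_i\lexlt S_{i-1}$. Thus $\Ps_i\neq\emptyset\iff S_i\lexlt S_{i-1}$, and because the alphabet is totally ordered and the suffixes are distinct, ``not $S_i\lexlt S_{i-1}$'' is precisely ``$S_{i-1}\lexlt S_i$''; negating gives $\Ps_i=\emptyset\iff S_{i-1}\lexlt S_i$ for $i\geq1$, completing the characterisation.

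I do not anticipate a serious obstacle; the only point requiring care is the transitivity step in the $j<i-1$ case, where one must correctly extract from $\nss[j]=i$ that the intermediate suffixes are \emph{larger} than $S_j$ (not smaller) and then chain the two strict inequalities in the right direction. The null-termination assumption on $S$ guarantees that $\nss[i-1]$ is always well-defined, so no boundary case slips through.
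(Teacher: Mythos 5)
Your proposal is correct and rests on exactly the same core argument as the paper's proof: the observation that $\nss[i-1]=i$ iff $S_{i-1}\lexgt S_i$, plus the transitivity chain $S_i\lexlt S_j\lexlt S_{i-1}$ obtained for a hypothetical $j<i-1$ with $\nss[j]=i$. You merely run that chain directly (proving the ``furthermore'' clause first and deriving the equivalence from it) where the paper runs it as a contradiction under the case $S_{i-1}\lexlt S_i$; the content is the same.
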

\begin{proof}
	$\Ps_0 = \emptyset$ by definition. Let $i\in\intervCO{1}{n}$. If $S_{i-1}
	\lexgt S_i$ we have $\nss[i-1] = i$ and thus $i-1\in\Ps_i$. Otherwise
	($S_{i-1}\lexlt S_i$), assume there is some $j<i-1$ such that $\nss[j] = i$.
	By definition, $S_j \lexgt S_i$ and $S_j\lexlt S_k$ for each
	$k\in\intervOO{j}{i}$. But by transitivity we also have $S_j \lexgt S_{i-1}$, which is a
	contradiction, hence $\Ps_i$ must be empty.
\end{proof}
\begin{lemma}
	\label{lemma_prev_set_last_child}
	For some $j\in\intervCO{0}{i}$, we have $j\in\Ps_i$ if and only if $j$'s
	last child is in $\Ps_i$, or $j=i-1$ and $S_j\lexgt S_i$.
\end{lemma}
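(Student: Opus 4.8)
The plan is to prove the two implications of the biconditional separately, in each case peeling off the boundary element $i-1$ (whose membership in $\Ps_i$ is settled entirely by Lemma \ref{lemma_prev_set_pred}) and then reducing the remaining situation to a single equivalence relating $j$ to its last child $c$ in the $\pss$-tree. Since $S$ is null-terminated, all suffixes are pairwise distinct, so for $a\neq b$ exactly one of $S_a\lexlt S_b$ and $S_a\lexgt S_b$ holds. Throughout I would translate the two relations into inequalities: $\pss[c]=j$ means $S_j\lexlt S_c$ while $S_k\lexgt S_c$ for every $k\in\intervOO{j}{c}$, and $\nss[j]=i$ means $S_i\lexlt S_j$ while $S_m\lexgt S_j$ for every $m\in\intervOO{j}{i}$. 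For the boundary element, if $j=i-1$ and $S_j\lexgt S_i$ then $i\geq1$ and $S_{i-1}\lexgt S_i$, so Lemma \ref{lemma_prev_set_pred} gives $i-1\in\Ps_i$; conversely, if $j=i-1$ and $j\in\Ps_i$, then $\nss[j]=i$ forces $S_{i-1}\lexgt S_i$. This reduces everything to the case $j<i-1$ together with the claim that, writing $c$ for the last child of $j$, one has $\nss[j]=i\iff\nss[c]=i$.

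Next I would record two structural facts. First, if $j<i-1$ and $\nss[j]=i$, then $S_j\lexlt S_{j+1}$ forces $\pss[j+1]=j$, so $j+1$ is a child of $j$ and the last child $c$ exists. Second, whenever $S_j\lexgt S_i$, every child $k$ of $j$ satisfies $k<i$: the case $k=i$ would require $S_j\lexlt S_i$, and the case $k>i$ would put $i\in\intervOO{j}{k}$ and hence $S_i\lexgt S_k$, contradicting $S_i\lexlt S_j\lexlt S_k$. In particular $j<c<i$ in both directions of the argument below.

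For the direction ``$\Leftarrow$'' I assume $\nss[c]=i$ (so $c<i$ automatically) and aim to show $\nss[j]=i$. From $\pss[c]=j$ and $\nss[c]=i$ I get $S_j\lexlt S_c$ and $S_c\lexgt S_i$, and splitting $m\in\intervOO{j}{i}$ into $m<c$, $m=c$, $c<m$ yields $S_m\lexgt S_c\lexgt S_i$ in every case. To obtain $S_i\lexlt S_j$ I argue by contradiction: if instead $S_j\lexlt S_i$, these inequalities make $\pss[i]=j$, so $i$ would be a child of $j$ exceeding $c$, contradicting maximality of the last child. Then $S_i\lexlt S_j$ together with $S_m\lexgt S_c\lexgt S_j$ for all $m\in\intervOO{j}{i}$ gives $\nss[j]=i$. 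For ``$\Rightarrow$'' I assume $j\in\Ps_i$ with $j<i-1$; by the structural facts $c$ exists and $j<c<i$, and $S_i\lexlt S_j\lexlt S_c$ forces $\nss[c]\leq i$. Supposing $\nss[c]=m_0<i$, I show $m_0$ would be a child of $j$ beyond $c$: here $S_{m_0}\lexlt S_c$ and $S_j\lexlt S_{m_0}$, while for every $l\in\intervOO{j}{m_0}$ the cases $l<c$, $l=c$, $c<l<m_0$ all give $S_l\lexgt S_{m_0}$ (the last via $\nss[c]=m_0$); hence $\pss[m_0]=j$, again contradicting maximality of $c$. Therefore $\nss[c]=i$, i.e.\ $c\in\Ps_i$.

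The main obstacle is the coupling of the two orders induced by $\pss$ and $\nss$: the whole argument hinges on using the maximality of the last child to forbid an additional child of $j$ lying strictly between $c$ and $i$, and this forbidden child is precisely what a failure of either implication would produce. Once that maximality is exploited through the two contradiction arguments, what remains is only the routine transitivity bookkeeping of $\lexlt$ and $\lexgt$ across the nested intervals $\intervOO{j}{c}$, $\set{c}$, and $\intervOO{c}{i}$.
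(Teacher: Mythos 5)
Your proof is correct and takes essentially the same route as the paper's: both reduce the lemma to the equivalence ``$j\in\Ps_i$ iff $j$'s last child $c$ is in $\Ps_i$'' and settle each direction by a contradiction in which the offending index ($i$, respectively $\nss[c]$) would be a child of $j$ exceeding $c$, violating the maximality of the last child. The only difference is organisational — the paper dismisses the degenerate cases (leaves, $j=i-1$, $\Ps_i=\emptyset$) up front via Lemma \ref{lemma_prev_set_pred}, while you fold them into your boundary case and two structural facts — which does not change the substance.
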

\begin{proof}
	By Lemma \ref{lemma_prev_set_pred} we may assume $\Ps_i\neq\emptyset$ and
	$j+1<i$, otherwise the claim is trivially true.
	If $j$ is a leaf we have $\nss[j]=j+1 < i$ and thus $j\notin\Ps_i$ by definition.
	Hence assume $j$ is not a leaf and has $j'>j$ as last child, i.e.\
	$\pss[j']=j$ and there is no $k>j'$ with $\pss[k]=j$.
	It suffices to show that $j'\in\Ps_i$ if and only if $j\in\Ps_i$.
	Note that $\pss[j']=j$ implies $\nss[j] > j'$.

	$\implies:$
	From $\nss[j'] = i$ and thus $S_k \lexgt S_{j'} \lexgt S_j$ (for
	all $k\in\intervOO{j'}{i}$) we have $\nss[j]\geq i$. Assume
	$\nss[j] > i$. Then $S_i \lexgt S_j$ and thus $\pss[i] =
	j$, which is a contradiction.

	$\impliedby:$ From $S_i \lexlt S_j \lexlt S_{j'}$ we have $\nss[j'] \leq i$. Assume $\nss[j'] < i$
	for a contradiction. For all $k\in\intervOO{j}{j'}$, $\pss[j']=j$ implies
	$S_{k} \lexgt S_{j'}$. Furthermore,
	for all $k\in\intervCO{j'}{\nss[j']}$ we have $S_k\lexgt
	S_{\nss[j']}$ by definition.
	In combination this implies $S_k\lexgt S_{\nss[j']}$ for all $k\in\intervOO{j}{\nss[j']}$.
	As $\nss[j] = i > \nss[j']$ we hence have $\pss[\nss[j']] = j$, which is a contradiction.
\end{proof}
Specifically, (if $\Ps_i$ is not empty) we can iterate over $\Ps_i$ by
walking up the \pss-tree starting from $i-1$ and halting when we encounter a
node that is not the last child of its parent.\footnote{
	Note that $n-1$ is the last child of the artificial root -1. This ensures that we always
	halt before we actually reach the root of the $\pss$-tree.
	Moreover, since the elements in $\Ps_i$ belong to different Lyndon groups (Corollary \ref{cor_different_groups}), the order in which we process them is not important.
}
Baier \cite{baier2015linear,baier2016linear} tests whether $i-1$ ($\pss[j]$) is in $\Ps_i$ by explicitly checking whether $i-1$
($\pss[j]$) has already been written to $A$. This is done by
having an explicit marker for each suffix.
Reading and writing those markers leads to bad cache performance because the
accessed memory locations are hard to predict (for the CPU/compiler).
Lemmata \ref{lemma_prev_set_pred} and \ref{lemma_prev_set_last_child} enable us to avoid
reading and writing those markers. In fact, in our implementation of Phase II, the
array $A$ is the only memory written to that is not always in the cache.
Lemma \ref{lemma_prev_set_pred} tells us whether we need to
follow the $\pss$-chain starting at $i-1$ or not. Namely, this is the case if and only if
$S_{i-1}\lexgt S_i$, i.e.\ $i-1$ is a leaf in the $\pss$-tree.
This information is required when we encounter $i$ in $A$ during the outer for-loop
in Algorithm \ref{alg_phaseII_basic}, thus we \emph{mark} such an entry $i$ in
$A$ if and only if $\Ps_i\neq\emptyset$.
Implementation-wise, we use the most significant bit
(MSB) of an entry to indicate whether it is marked or not.
By definition, we have $S_{i-1}\lexgt S_i$ if and only if $\pss[i] + 1 < i$.
Since $\pss[i]$ must be accessed anyway when $i$ is inserted into $A$ (for
traversing the $\pss$-chain), we can insert $i$ marked or unmarked into $A$.
Further, Lemma \ref{lemma_prev_set_last_child} implies that we must
stop traversing a $\pss$-chain when the current element is not the last child of
its parent. We mark the entries in $\pss$ accordingly, also using the MSB of
each entry.  In the rest of this paper, we assume the $\pss$-array to be marked
in this way.

Consider for instance $i=6$ in our running example. As $6-1=5$ is a leaf (cf.\
\reffig{fig_Lyndon_pss}), we have $5\in\Ps_6$. We can deduce the fact that
$5$ is indeed a leaf from $\pss[6]=0 < 5$ alone. If we had $\pss[6]=5$ instead,
we would have $\nss[5]\neq6$ and thus $5\notin\Ps_6$. Further, $5$ is the last
child of $\pss[5]=4$, so $4\in\Ps_6$. Since $4$ is not the last child of
$\pss[4]=0$, we have $\Ps_6 = \set{4,5}$.

These optimisations unfortunately come at the cost of $2n$
additional bits of working memory for the markings.
However, as they are integrated into $\pss$ and $A$ there are no additional
cache misses.

Let $G[i]$ be the index of the group start pointer of $i$'s group in $C$. Phase
II with our first major improvement compared to Baier's algorithm is shown
in Algorithm \ref{alg_phaseII_concrete}.
\begin{algorithm}[t]
	$A[0] \gets n-1$\;
	\For{$i = 0 \to n-1$}{
		\If(\tcp*[h]{i.e.\ $A[i]$ is marked}){$A[i]-1$ is a leaf in the $\pss$-tree}{
			$p\gets A[i]-1$\;
			\Do{$p$ is the last child of $\pss[p]$}{
				$A[C[G[p]]] \gets p$\;
				$C[G[p]] \gets C[G[p]] + 1$\;
				$p\gets\pss[p]$\;
			}( \tcp*[h]{i.e.\ $\pss[p]$ is marked})
		}
	}
	\caption{
		Concrete implementation of Phase II of \GSACA{}.
		The array $G$ maps each suffix to its Lyndon group and $C$ maps the Lyndon
		groups that resulted from Phase I to their current start.
		The correctness immediately follows from the correctness of Algorithm
		\ref{alg_phaseII_basic} and Lemmata \ref{lemma_prev_set_pred} and
		\ref{lemma_prev_set_last_child}.
	}
	\label{alg_phaseII_concrete}
\end{algorithm}

The second major change concerns the cache-unfriendliness of traversing the $\Ps_i$-sets.
This bad cache performance results from the fact that the next $\pss$-value (and
the group start pointer) cannot be fetched until the current one is in memory.
Instead of traversing the $\Ps_i$-sets one after another, we opt to traversing
multiple such sets in a sort of breadth-first-search manner simultaneously.
Specifically, we maintain a small ($\leq 2^{10}$ elements) queue $Q$ of elements
(nodes in the $\pss$-tree) that can currently be processed.
Then we iterate over $Q$ and process the entries one after another. Parents of last
children are inserted into $Q$ in the same order as the respective children.
After each iteration, we continue to scan over the suffix array and for each
encountered marked entry $i$ insert $i-1$ into $Q$ until we either encounter
an empty entry in $A$ or $Q$ reaches its maximum capacity.
This is repeated until the suffix array emerges.
The queue size could be unlimited, but limiting it ensures that it fits into the
CPU's cache.
\reffig{fig_example_phase_II_bfs} shows our Phase II on the running
example and Algorithm \ref{alg_phaseII_bfs} describes it formally in pseudo code.
Note that this optimisation is only useful when the queue contains many elements, otherwise
there is no time to prefetch the required data for an element in the queue while
others are processed, and we effectively have Algorithm \ref{alg_phaseII_basic}
with some additional overhead. Fortunately, in real world data this is usually the case
and the small overhead for maintaining the queue is more than offset by the
better cache performance (cf.\ Section \ref{sec_experiments}).
\begin{algorithm}[t]
	$A \gets \bra{n-1}\bot^{n-1}$ \tcp*[l]{set $A[0]=n-1$, fill the rest with ``undefined''}
	$Q\gets$ queue containing only $n-1$\;
	$i \gets 1$\tcp*[l]{current index in $A$}
	\While{$Q$ is not empty}{
		$s\gets \mathit{Q.size}()$\;
		\RepTimes(\tcp*[h]{insert elements that are currently in the queue}){$s$}{
			$v\gets \mathit{Q.pop}()$\;
			\If(\tcp*[h]{$v$ is last child of $\pss[v]$}){$\pss[v]$ is marked}{
				$\mathit{Q.push}(\pss[v])$\;
			}
			$A[C[G[v]]] \gets v$\tcp*[l]{insert $v$}
			\lIf(\tcp*[h]{$v-1$ is leaf}){$\pss[v] + 1 < v$}{mark $A[C[G[v]]]$}
			$C[G[v]] \gets C[G[v]] + 1$\tcp*[l]{increment start of $v$'s old group}
		}
		\While(\tcp*[h]{refill the queue}){$\mathit{Q.size}() < w \land i< n \land A[i]\neq\bot$}{
			\If(\tcp*[h]{$A[i]-1$ is leaf}){$A[i]$ is marked}{
				$\mathit{Q.push}(A[i] - 1)$\;
			}
			$i\gets i+1$\;
		}
	}
	\caption{
		Breadth-first approach to Phase II. The constant $w$ is the maximum
		queue size.
	}
	\label{alg_phaseII_bfs}
\end{algorithm}
\begin{figure}[t]
	\centering
	\begin{tikzpicture}
		\definecolor{marked}{RGB}{170,225,170}
		\definecolor{inserted}{RGB}{150,150,200}

		\tikzset{markednode/.style={rectangle,rounded corners,fill=marked,minimum size=15pt}}
		\tikzset{insertednode/.style={rectangle,rounded corners,fill=inserted,minimum size=15pt}}
		\newcommand{\drawline}[1]{\draw[thick] (0.75*#1-0.375,\H-1em) -- (0.75*\i-0.375,\H+1em)}
		
		\newdimen\Hskip
		\tikzmath{\Hskip=1.2em;}

		\newdimen\H
		\tikzmath{\H=0cm;}
		\newdimen\T
		\tikzmath{\T=-7.5pt-0.75cm*0;}

		\foreach \c [count=\i from 0] in {12, 0, 6, 10,4, 1, 7, 3, 11,2,5,8,9} {
			\ifthenelse{\i<2 \OR \i=8}{
				\node[markednode] at (0.75*\i,\H) {};
			}{
				\ifthenelse{\i < 4}{
					\node[insertednode] at (0.75*\i,\H) {};
				}{}
			}
			\node (a\i) at (0.75*\i,\H){\strut{\c}};
		}
		\foreach \i in {1,2,3,4,5,6,7,8,9} { \drawline{\i}; }

		\newdimen\W
		\tikzmath{
			coordinate \C;
			\C=(a12.east)-(a0.west);
			\W=\Cx + 0.75cm * 1.5 + 1cm;
		}
		\tikzset{textnode/.style={anchor=north west,text width=\W,inner sep=0pt}}
		
		\node[textnode] (t) at (\T,\H+3.2em) {
			The first step is the same as in \reffig{fig_example_phase_II}.
			Note that $\Ps_0=\emptyset$, hence 0 is not marked for further processing.
		};
		
		\pgfextracty{\H}{\pgfpointanchor{a0}{south}}
		\tikzmath{\H=\H+1em-\Hskip;}

		\node[textnode] (t) at (\T,\H) {
			Now $6-1=5$ and $10-1=9$ are inserted into the queue and $6$ and
			$10$ are unmarked.
		};

		\pgfextracty{\H}{\pgfpointanchor{t}{south}}
		\tikzmath{\H=\H-\Hskip;}

		\foreach \c [count=\i from 0] in {12, 0, 6, 10,4, 1, 7, 3, 11,2,5,8,9} {
			\ifthenelse{\i<4 \OR \i=8}{
				\node[markednode] at (0.75*\i,\H) {};
			}{ }
			\node (a\i) at (0.75*\i,\H){\strut{\c}};
		}
		\foreach \i in {1,2,3,4,5,6,7,8,9} { \drawline{\i}; }
		
		\node[anchor=west] (q) at (0.75*12.75,\H) {\strut{$Q=$}};
		\node (q1) at (0.75*13.5+0.5,\H) {\strut{5}};
		\node (q1) at (0.75*13.5+1,\H) {\strut{9}};
		
		\pgfextracty{\H}{\pgfpointanchor{q}{south}}
		\tikzmath{\H=\H+1em-\Hskip;}
		\node[textnode] (t) at (\T,\H) {
			In the next step, the elements in the queue are inserted and replaced in the
			queue with their parents (if they are last children, which happens
			to be the case for 5 and 9).
			Note that they must be inserted in the same order as they appear in $Q$.
		};
		
		\pgfextracty{\H}{\pgfpointanchor{t}{south}}
		\tikzmath{\H=\H-\Hskip;}

		\foreach \c [count=\i from 0] in {12, 0, 6, 10,4, 1, 7, 3, 11,5,9,2,8} {
			\ifthenelse{\i<4 \OR \i=8 \OR \i=9}{
				\node[markednode] at (0.75*\i,\H) {};
			}{
				\ifthenelse{\i=10}{
					\node[insertednode] at (0.75*\i,\H) {};
				}{}
			}
			\node (a\i) at (0.75*\i,\H){\strut{\c}};
		}
		\foreach \i in {1,2,3,4,5,6,7,8,9,10,11} { \drawline{\i}; }
		
		\node[anchor=west] (q) at (0.75*12.75,\H) {\strut{$Q=$}};
		\node (q1) at (0.75*13.5+0.5,\H) {\strut{4}};
		\node (q1) at (0.75*13.5+1.0,\H) {\strut{7}};

		\pgfextracty{\H}{\pgfpointanchor{q}{south}}
		\tikzmath{\H=\H+1em-\Hskip;}
		\node[textnode] (t) at (\T,\H) {
			Neither 4 nor 7 are the last child of their respective parent.
		};
		
		\pgfextracty{\H}{\pgfpointanchor{t}{south}}
		\tikzmath{\H=\H-\Hskip;}

		\foreach \c [count=\i from 0] in {12, 0, 6, 10,4, 1, 7, 3, 11,5,9,2,8} {
			\ifthenelse{\i<4 \OR \i=6 \OR \i=8 \OR \i=9}{
				\node[markednode] at (0.75*\i,\H) {};
			}{
				\ifthenelse{\i=4 \OR \i=10}{
					\node[insertednode] at (0.75*\i,\H) {};
				}{}
			}
			\node (a\i) at (0.75*\i,\H){\strut{\c}};
		}
		\foreach \i in {1,2,3,4,5,6,7,8,9,10,11} { \drawline{\i}; }
		
		\node[anchor=west] (q) at (0.75*12.75,\H) {\strut{$Q$ is empty}};

		\pgfextracty{\H}{\pgfpointanchor{q}{south}}
		\tikzmath{\H=\H+1em-\Hskip;}
		\node[textnode] (t) at (\T,\H) {
			However, we can advance the scan over $A$ and insert $4-1=3$ into
			$Q$.
		};
		
		\pgfextracty{\H}{\pgfpointanchor{t}{south}}
		\tikzmath{\H=\H-\Hskip;}

		\foreach \c [count=\i from 0] in {12, 0, 6, 10,4, 1, 7, 3, 11,5,9,2,8} {
			\ifthenelse{\i<5 \OR \i=6 \OR \i=8 \OR \i=9}{
				\node[markednode] at (0.75*\i,\H) {};
			}{
				\ifthenelse{\i=10}{
					\node[insertednode] at (0.75*\i,\H) {};
				}{}
			}
			\node (a\i) at (0.75*\i,\H){\strut{\c}};
		}
		\foreach \i in {1,2,3,4,5,6,7,8,9,10,11} {
				\draw[thick] (0.75*\i-0.375,\H-1em) -- (0.75*\i-0.375,\H+1em);
		}
		
		\node[anchor=west] (q) at (0.75*12.75,\H) {\strut{$Q=$}};
		\node (q1) at (0.75*13.5+0.5,\H) {\strut{3}};

		\pgfextracty{\H}{\pgfpointanchor{q}{south}}
		\tikzmath{\H=\H+1em-\Hskip;}
		\node[textnode] (t) at (\T,\H) {
			Next, 3 is inserted into $A$. As 3 is the last child of 1, we insert 1 into $Q$ and in the next
			step into $A$. As 1 is not the last child of $\pss[1]=0$, $Q$ is
			now empty.
		};
		
		\pgfextracty{\H}{\pgfpointanchor{t}{south}}
		\tikzmath{\H=\H-\Hskip;}

		\foreach \c [count=\i from 0] in {12, 0, 6, 10,4, 1, 7, 3, 11,5,9,2,8} {
			\ifthenelse{\i<7 \OR \i=8 \OR \i=9}{
				\node[markednode] at (0.75*\i,\H) {};
			}{
				\ifthenelse{\i=7 \OR \i=10}{
					\node[insertednode] at (0.75*\i,\H) {};
				}{}
			}
			\node (a\i) at (0.75*\i,\H){\strut{\c}};
		}
		\foreach \i in {1,2,3,4,5,6,7,8,9,10,11} { \drawline{\i}; }
		
		\node[anchor=west] (q) at (0.75*12.75,\H) {\strut{$Q$ is empty}};

		\pgfextracty{\H}{\pgfpointanchor{q}{south}}
		\tikzmath{\H=\H+1em-\Hskip;}
		\node[textnode] (t) at (\T,\H) {
			We can continue the
			scan over $A$ and insert $3-1=2$ and $9-1=8$ into $Q$.
		};
		
		\pgfextracty{\H}{\pgfpointanchor{t}{south}}
		\tikzmath{\H=\H-\Hskip;}

		\foreach \c [count=\i from 0] in {12, 0, 6, 10,4, 1, 7, 3, 11,5,9,2,8} {
			\ifthenelse{\i<11}{
				\node[markednode] at (0.75*\i,\H) {};
			}{ }
			\node (a\i) at (0.75*\i,\H){\strut{\c}};
		}
		\foreach \i in {1,2,3,4,5,6,7,8,9,10,11} { \drawline{\i}; }
		
		\node[anchor=west] (q) at (0.75*12.75,\H) {\strut{$Q=$}};
		\node (q1) at (0.75*13.5+0.5,\H) {\strut{2}};
		\node (q1) at (0.75*13.5+1.0,\H) {\strut{8}};

		\pgfextracty{\H}{\pgfpointanchor{q}{south}}
		\tikzmath{\H=\H+1em-\Hskip;}
		\node[textnode] (t) at (\T,\H) {
			Finally, the elements in the queue can be inserted and the suffix
			array emerges.
		};

		\pgfextracty{\H}{\pgfpointanchor{t}{south}}
		\tikzmath{\H=\H-\Hskip;}
		
		\foreach \c [count=\i from 0] in {12,0,6,10,4,1,7,3,11,5,9,2,8} {
			\node[markednode] at (0.75*\i,\H) {};
			\node (a\i) at (0.75*\i,\H){\strut{\c}};
		}
		\foreach \i in {1,2,3,4,5,6,7,8,9,10,11,12} { \drawline{\i}; }
	\end{tikzpicture}
	\caption{
		Refining a Lyndon grouping for $S=abccabccbcc\$$ (see \reffig{fig_example_grouping}) into the suffix array using Algorithm \ref{alg_phaseII_bfs}.
		Marked entries are coloured blue while inserted but unmarked elements are
		coloured green. Note that the uncoloured entries are not actually present
		in the array $A$ but only serve to indicate the current Lyndon grouping.
	}
	\label{fig_example_phase_II_bfs}
\end{figure}

\begin{theorem}
	Algorithm \ref{alg_phaseII_bfs} correctly computes the suffix array from a
	Lyndon grouping.
	\label{lemma_bfs_correct}
\end{theorem}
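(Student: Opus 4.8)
The plan is to show that Algorithm~\ref{alg_phaseII_bfs} merely \emph{reorders} the insertions performed by the provably correct Algorithm~\ref{alg_phaseII_basic} (equivalently its concrete form, Algorithm~\ref{alg_phaseII_concrete}), while still writing each element to its final position through the group-start pointers $C[G[\cdot]]$. Both algorithms place an element $v$ at the current start $C[G[v]]$ of its Phase~I group and then advance that pointer; hence it suffices to establish two facts: \textbf{(A) completeness} — every $v\in\intervCO{0}{n-1}$ is inserted exactly once — and \textbf{(B) within-group order} — for any $u,v$ lying in the same Lyndon group with $S_u\lexlt S_v$, $u$ is inserted before $v$. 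Given (A) and (B), each group is filled left to right in increasing lexicographic order, which is exactly its order in $\SA$, so the correctness argument of Algorithm~\ref{alg_phaseII_basic} carries over verbatim.

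First I would record that the queue enumerates the sets $\Ps_x$ correctly: by Lemmata~\ref{lemma_prev_set_pred} and~\ref{lemma_prev_set_last_child}, $\Ps_x$ is exactly the chain obtained by starting at $x-1$ and repeatedly moving to $\pss[\cdot]$ while the current node is the last child of its parent, which is precisely what the test ``$\pss[v]$ is marked'' together with the push of $\pss[v]$ implements. I will call the set seeded from $x-1$ \emph{chain $x$}; all its members share next smaller suffix $x$, and by Lemma~\ref{lemma_Ps_diff}/Corollary~\ref{cor_different_groups} they lie in pairwise different groups, so the order in which a single chain is drained is irrelevant. For (A) and termination I would then induct on the $\SA$-position $p$: $\SA[p]$ is a member of chain $\nss[\SA[p]]$, whose seed is pushed only after the scan has passed the (strictly smaller) $\SA$-position of $\nss[\SA[p]]$; thus once all of $\SA\intervCC{0}{p-1}$ are in place the scan reaches $p$, reads the marking, and seeds $\Ps_{\SA[p]}$ (the base case $p=0$ being the initial queue, which starts draining $\Ps_{n-1}$). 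This shows every chain is eventually seeded, every element inserted exactly once, and the loop halts.

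The heart of the proof is (B). Fix same-group $u,v$ with $S_u\lexlt S_v$. Since they share the Lyndon context $\alpha$, Lemma~\ref{lemma_Ps_diff} gives $\nss[u]\neq\nss[v]$, and $S_u=\alpha S_{\nss[u]}$, $S_v=\alpha S_{\nss[v]}$ force $S_{\nss[u]}\lexlt S_{\nss[v]}$; hence $\nss[u]$ precedes $\nss[v]$ in $\SA$, so chain $\nss[u]$ is seeded no later than chain $\nss[v]$, and if both in the same refill, the former's seed enters the queue first. I would then exploit the following loop invariant, proved by induction on the rounds of the outer \textbf{while}: the queue is always strictly sorted by the $\SA$-position of the members' next smaller suffixes. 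This holds because pushing $\pss[v]$ leaves $\nss[\cdot]$ unchanged along a chain, the batch transfers the current frontiers to the back in their existing order, and each refill appends seeds of strictly larger $\SA$-position than any chain already active. As the queue is FIFO and drains exactly one level of every active chain per round, the frontiers of chains $\nss[u]$ and $\nss[v]$ preserve their relative order throughout. It then remains to place $u$ and $v$ at the \emph{same depth} in their chains: the depth equals the length of the last-child descent, which by the $\pss$-tree/Lyndon-prefix correspondence (Lemma~\ref{lemma_lyndon_childrens}), together with the uniqueness of the Lyndon factorisation, is determined solely by the shared context $\alpha$. Combining equal depth, seed-round monotonicity and the order-preserving FIFO, the element with the smaller next-smaller-suffix position, namely $u$, is popped and inserted no later than $v$, and strictly earlier when their chains are seeded in different rounds.

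The step I expect to be the main obstacle is exactly this synchronisation: arguing that the breadth-first interleaving of arbitrarily many chains of different lengths never lets a ``shallow'' larger suffix overtake a ``deep'' smaller one of the same group. The sorted-queue invariant controls the horizontal order, but it is the vertical (depth) alignment that actually rules out an overtake, and this rests on the non-obvious structural fact that two suffixes sharing a Lyndon prefix descend to equal depth in the $\pss$-tree. I would therefore isolate this depth-equality statement as a separate lemma and prove it first, after which the remaining bookkeeping for (A), (B) and termination is routine.
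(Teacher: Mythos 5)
Your proposal is correct and takes essentially the same route as the paper's own proof: both reduce correctness to showing that the FIFO breadth-first interleaving cannot change the within-group insertion order of Algorithm \ref{alg_phaseII_basic}, and both hinge on precisely the depth-equality fact you isolate as the crux, namely that two suffixes sharing a Lyndon prefix have isomorphic $\pss$-subtrees and hence equal last-child-descent lengths (the paper cites Bille et al.\ \cite{bille2019space} for this isomorphism rather than rederiving it from Lemma \ref{lemma_lyndon_childrens}). The paper packages the bookkeeping as ``Algorithm \ref{alg_phaseII_bfs} equals Algorithm \ref{alg_phaseII_basic} for queue size $1$, and its output is independent of the queue size,'' and is terser about completeness and termination, but the substance matches your invariant-based argument.
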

\begin{proof}
	By Lemmata \ref{lemma_prev_set_pred} and \ref{lemma_prev_set_last_child},
	Algorithms \ref{alg_phaseII_basic} and
	\ref{alg_phaseII_bfs} are equivalent for a maximum queue size of 1.
	Therefore it suffices to show that the result of Algorithm
	\ref{alg_phaseII_bfs} is independent of the queue size.
	Assume for a contradiction that the algorithm inserts two elements $i$ and $j$ with
	$S_i\lexlt S_j$ belonging to the same Lyndon group with context $\alpha$, but in
	a different order as Algorithm \ref{alg_phaseII_basic} would.
	This can only happen if $j$ is inserted earlier than $i$.
	Note that, since $i$ and $j$ have the same Lyndon prefix $\alpha$, the
	\pss-subtrees $T_i$ and $T_j$ rooted at $i$ and $j$, respectively,
	are isomorphic (see \cite{bille2019space}).
	In particular, the path from the rightmost leaf in $T_i$ to $i$ has the same
	length as the path from the rightmost leaf in $T_j$ to $j$.
	Thus, $i$ and $j$ are inserted in the same order as $S_{i+\abs\alpha}$ and
	$S_{j+\abs\alpha}$ occur in the suffix array. Now the claim follows inductively.
\end{proof}


\subsection{Phase I}
\label{subsec_phase_I}

In Phase I, a Lyndon grouping is derived from a suffix grouping in which the
group contexts have length (at least) one. That is, the suffixes are
sorted and grouped by their Lyndon prefixes.
Lemma \ref{lemma_lyndon_childrens} describes the relationship between the Lyndon prefixes and the
$\pss$-tree that is essential to Phase I of the grouping principle.
\begin{lemma}
	\label{lemma_lyndon_childrens}
	Let $c_1<\dots<c_k$ be the children of $i\in\intervCO{0}{n}$ in the
	$\pss$-tree as in Definition \ref{def_pss_tree}. Then
	$\Ls_i$ is $S[i]$ concatenated with the Lyndon prefixes of $c_1,\dots,c_k$.
	More formally:
	\begin{align*}
		\Ls_i &= S\intervCO{i}{\nss[i]} \\
			&= S[i]S\intervCO{c_1}{c_2}\dots S\intervCO{c_k}{\nss[i]} \\
			&= S[i]\Ls_{c_1}\dots\Ls_{c_k}
	\end{align*}
\end{lemma}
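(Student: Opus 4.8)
The plan is to take the characterisation $\Ls_i = S\intervCO{i}{\nss[i]}$ as known --- it is the standard description of the Lyndon prefix by the next smaller suffix and is already invoked in the proof of Lemma~\ref{lemma_Ps_diff} --- so that the lemma reduces to a purely combinatorial statement about how the children of $i$ cut the interval $\intervCO{i}{\nss[i]}$. Writing $c_1<\dots<c_k$ for the children of $i$, I would prove the structural fact that $c_1 = i+1$, that $\nss[c_\ell] = c_{\ell+1}$ for each $\ell\in\intervCO{1}{k}$, and that $\nss[c_k] = \nss[i]$. Granting this, $\Ls_{c_\ell} = S\intervCO{c_\ell}{\nss[c_\ell]}$ equals $S\intervCO{c_\ell}{c_{\ell+1}}$ for $\ell<k$ and $S\intervCO{c_k}{\nss[i]}$ for $\ell=k$; hence the half-open blocks $\intervCO{c_1}{c_2},\dots,\intervCO{c_{k-1}}{c_k},\intervCO{c_k}{\nss[i]}$ together with the singleton $\set{i}$ tile $\intervCO{i}{\nss[i]}$, which immediately gives $\Ls_i = S[i]\Ls_{c_1}\cdots\Ls_{c_k}$.

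I would first settle the base case and the anchor. The node $i$ is a leaf precisely when $\nss[i]=i+1$: if $S_{i+1}\lexlt S_i$ then $\nss[i]=i+1$, and any would-be child $c>i$ would satisfy $S_{i+1}\lexlt S_i\lexlt S_c$, forcing $\pss[c]\geq i+1$ and contradicting $\pss[c]=i$; conversely, if $S_i\lexlt S_{i+1}$ then $\pss[i+1]=i$, so $i+1$ is a child and, being the smallest index exceeding $i$, is the smallest child, giving $c_1=i+1$. In the leaf case $k=0$ and $\Ls_i=S[i]=S\intervCO{i}{\nss[i]}$, so the claim holds trivially.

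The heart of the argument --- and the only delicate point --- is the recursion step. For a child $c$ of $i$ I would set $c'=\nss[c]$ and show that $c'$ is the immediate successor of $c$ among the children of $i$ when $c'<\nss[i]$, and that $c'=\nss[i]$ with $c$ the last child otherwise. Two preliminary observations are that every child lies in $\intervOO{i}{\nss[i]}$ (otherwise $\nss[i]\in\intervOO{i}{c}$ with $S_{\nss[i]}\lexlt S_i\lexlt S_c$ would push $\pss[c]$ beyond $i$) and that $\nss[c]\leq\nss[i]$ (since $S_{\nss[i]}\lexlt S_c$). The fiddly part is verifying $\pss[c']=i$ when $c'<\nss[i]$: one needs $S_j\lexgt S_{c'}$ for all $j\in\intervOO{i}{c'}$, which I would obtain by splitting the range into $\intervOO{i}{c}$ (where $S_j\lexgt S_c\lexgt S_{c'}$ because $\pss[c]=i$ gives $S_j\lexgt S_c$ and $\nss[c]=c'$ gives $S_c\lexgt S_{c'}$), the point $j=c$, and $\intervOO{c}{c'}$ (where minimality of $\nss[c]$ again gives $S_j\lexgt S_c\lexgt S_{c'}$); combined with $S_i\lexlt S_{c'}$ this yields $\pss[c']=i$. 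A symmetric transitivity argument shows that no child lies strictly between $c$ and $c'$ --- such a $c''$ would have $S_c\lexlt S_{c''}$ and hence $\pss[c'']\geq c>i$ --- so $c'$ is exactly the next child, and the same observation in the case $\nss[c]=\nss[i]$ forbids any larger child and makes $c$ the last one. Since the children are strictly increasing and bounded above by $\nss[i]$, iterating $c_{\ell+1}=\nss[c_\ell]$ from $c_1=i+1$ terminates exactly at $c_k$ with $\nss[c_k]=\nss[i]$, completing the tiling and the proof.
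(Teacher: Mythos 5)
Your proof is correct, and its core decomposition is the same as the paper's: both arguments reduce the lemma to the chain structure $c_1=i+1$, $\nss[c_\ell]=c_{\ell+1}$ for $\ell<k$, and $\nss[c_k]=\nss[i]$, so that the Lyndon prefixes of the children tile $\intervCO{i}{\nss[i]}$. The difference lies in how these chain facts are established. The paper gets $\nss[c_k]=\nss[i]$ for free by applying Lemma~\ref{lemma_prev_set_last_child} (with $i\in\Ps_{\nss[i]}$), and proves $\nss[c_j]=c_{j+1}$ by contradiction: if $\nss[c_j]<c_{j+1}$, then $\nss[c_j]$ would itself be a child of $i$ lying strictly between the consecutive children $c_j$ and $c_{j+1}$. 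You instead argue constructively and self-containedly: you verify directly that $c'=\nss[c]$ is a child of $i$ whenever $c'<\nss[i]$ (by splitting $\intervOO{i}{c'}$ at $c$ and using $\pss[c]=i$ together with minimality of $\nss[c]$), that no child can lie in $\intervOO{c}{c'}$, and that $\nss[c]=\nss[i]$ forces $c$ to be the last child --- the latter in effect re-proves the direction of Lemma~\ref{lemma_prev_set_last_child} that the paper cites. Your route is slightly longer but buys two things: it is independent of the earlier lemma, and it makes explicit the anchor $c_1=i+1$ (equivalently, $i$ is a leaf iff $\nss[i]=i+1$), which the paper's displayed identity needs but whose proof it leaves implicit; your interval-splitting derivation also avoids the induction over earlier siblings that the paper invokes at the analogous step. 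Conversely, the paper's proof is shorter because the contradiction framing makes the ``no child in between'' step automatic: it is built into labelling $c_j$ and $c_{j+1}$ as consecutive children.
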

\begin{proof}
	By definition we have $\Ls_i = S\intervCO{i}{\nss[i]}$.
	Assume $i$ has $k\geq1$ children $c_1<\dots<c_k$ in the $\pss$-tree
	(otherwise $\nss[i]=i+1$ and the claim is trivial).
	For the last child $c_k$ we have $\nss[c_k]=\nss[i]$ from Lemma
	\ref{lemma_prev_set_last_child}. Let $j\in\intervCO{1}{k}$ and assume
	$\nss[c_j] \neq c_{j+1}$. Then we have $\nss[c_j] < c_{j+1}$, otherwise
	$c_{j+1}$ would be a child of $c_j$. As we have $S_{\nss[c_j]}\lexlt
	S_{c_j}$ and $S_{c_j} \lexlt S_{c_{j'}}$ for each $j'\in\intervCO{1}{j}$
	(by induction), we also have $S_{\nss[c_j]} \lexlt S_{i'}$ for each
	$i'\in\intervOO{i}{\nss[c_j]}$. Since $\nss[i] > \nss[c_j]$, $\nss[c_j]$
	must be a child of $i$ in the $\pss$-tree, which is a contradiction.
\end{proof}

We start from the \emph{initial suffix grouping} in which the suffixes are grouped
according to their first characters.
From the relationship between the Lyndon prefixes and the $\pss$-tree in Lemma \ref{lemma_lyndon_childrens} 
one can get the general idea of extending the context
of a node's group with the Lyndon prefixes of its children (in correct order) while
maintaining the sorting \cite{baier2015linear}.
Note that any node is by definition in a higher group than its parent.
Also, by Lemma \ref{lemma_lyndon_childrens} the leaves of the $\pss$-tree
are already in Lyndon groups in the initial suffix grouping.
Therefore, if we consider the groups in lexicographically decreasing order
(i.e.\ higher to lower)
and append the context of the current group to each parent (and insert them into new
groups accordingly), each encountered group is guaranteed to be Lyndon
\cite{baier2015linear}. Consequently, we obtain a Lyndon grouping.
\reffig{fig_example_phase_I} shows this principle applied to our running
example.
\begin{figure}
	\centering
	\begin{tikzpicture}

		\definecolor{marked}{RGB}{170,225,170}
		\definecolor{done}{RGB}{150,150,150}
		
		\tikzset{markednode/.style={rectangle,rounded corners,fill=marked,minimum size=15pt}}
		\tikzset{donenode/.style={rectangle,rounded corners,fill=done,minimum size=15pt}}
		\newcommand{\drawline}[1]{\draw[thick] (0.75*#1-0.375,\H-1em) -- (0.75*\i-0.375,\H+1em)}
	
		\newcommand{\brac}[4]{\draw[decorate,decoration={brace,amplitude=5pt}] (#1) -- (#2) node (#4) [midway,yshift=+10pt] {#3}}
		
		\newdimen\Hskip
		\tikzmath{\Hskip=1.2em;}

		\newdimen\H
		\tikzmath{\H=0cm;}

		\foreach \c [count=\i from 0] in {12, 0, 6, 1,4,7,10, 3, 2,5,8,9,11} {
			\ifthenelse{\i > 6 \OR \i=0}{
				\node[markednode] (b\i) at (0.75*\i,\H) {};
			}{ }
			\node (a\i) at (0.75*\i,\H){\strut{\c}};
		}
		\foreach \i in {1,2,3,7,8} { \drawline{\i}; }
		\brac{b8.north west}{b12.north east}{\strut\texttt{e}}{context};
		
		\pgfextracty{\H}{\pgfpointanchor{a0}{south}}
		\tikzmath{\H=\H+1em-\Hskip;}

		\newdimen\W
		\tikzmath{
			coordinate \C;
			\C=(a12.east)-(a0.west);
			\W=\Cx;
		}
		\tikzset{textnode/.style={anchor=north west,text width=\W,inner sep=0pt}}
		
		\node[textnode] (t) at (-7.5pt,\H+4.4em) {
			In the initial suffix grouping, the suffixes are grouped according
			to their first characters.
		};

		\node[textnode] (t) at (-7.5pt,\H) {
			The first considered group contains the elements $2,5,8,9$ and
				$11$ and has context \texttt{e}. The parents of the elements are
				$1,4,10$ and $7$, where the former three each have one child in
				the current group and the latter has two. All are in
				the group with context \texttt{c}.
				Thus, we first move 7 to a new group with context
				\texttt{cee} and then 1,4 and 10 to a new group with context
				\texttt{ce}.
		};

		\pgfextracty{\H}{\pgfpointanchor{t}{south}}
		\tikzmath{\H=\H-\Hskip-0.2em;}
		
		\foreach \c [count=\i from 0] in {12, 0, 6, 1,4,10, 7, 3, 2,5,8,9,11} {
			\ifthenelse{\i=0 \OR \i=7 \OR \i=6}{
				\node[markednode] (b\i) at (0.75*\i,\H) {};
			}{
				\ifthenelse{\i>7}{
					\node[donenode] (b\i) at (0.75*\i,\H) {};
				}{ }
			}
			\node (a\i) at (0.75*\i,\H){\strut{\c}};
		}
		\foreach \i in {1,2,3,6,7,8} { \drawline{\i}; }
		\brac{b7.north west}{b7.north east}{\strut\texttt{d}}{context};
		
		\pgfextracty{\H}{\pgfpointanchor{a0}{south}}
		\tikzmath{\H=\H+1em-\Hskip;}

		\node[textnode] (t) at (-7.5pt,\H) {
			Next the group with context \texttt{d} containing 3 is
				processed. The parent of 3 is 1 in a group with context
				\texttt{ce}, so it is moved to a new group with context
				\texttt{ced}. Note that 4 and 10 are now also in a Lyndon group (still
				with context \texttt{ce}).
			};

		\pgfextracty{\H}{\pgfpointanchor{t}{south}}
		\tikzmath{\H=\H-\Hskip-1.2em;}
		
		\foreach \c [count=\i from 0] in {12, 0, 6, 4,10,1, 7, 3, 2,5,8,9,11} {
			\ifthenelse{\i=0 \OR \i=3 \OR \i=4 \OR \i=5 \OR \i=6}{
				\node[markednode] (b\i) at (0.75*\i,\H) {};
			}{
				\ifthenelse{\i>6}{
					\node[donenode] (b\i) at (0.75*\i,\H) {};
				}{ }
			}
			\node (a\i) at (0.75*\i,\H){\strut{\c}};
		}
		\foreach \i in {1,2,3,5,6,7,8} { \drawline{\i}; }
		\brac{b6.north west}{b6.north east}{\strut\texttt{cee}}{context};

		\pgfextracty{\H}{\pgfpointanchor{a0}{south}}
		\tikzmath{\H=\H+1em-\Hskip;}

		\node[textnode] (t) at (-7.5pt,\H) {
			The next processed group contains 7 and has context
			\texttt{cee}. The parent 6 is moved to a new
			group with context \texttt{bcee}. (As 6 is already in a singleton
			group, the actual grouping remains the same except for the context
			of 6's group.)
		};

		\pgfextracty{\H}{\pgfpointanchor{t}{south}}
		\tikzmath{\H=\H-\Hskip-1.2em;}

		\foreach \c [count=\i from 0] in {12, 0, 6, 4,10,1, 7, 3, 2,5,8,9,11} {
			\ifthenelse{\i=0 \OR \i=3 \OR \i=4 \OR \i=5}{
				\node[markednode] (b\i) at (0.75*\i,\H) {};
			}{
				\ifthenelse{\i>5}{
					\node[donenode] (b\i) at (0.75*\i,\H) {};
				}{ }
			}
			\node (a\i) at (0.75*\i,\H){\strut{\c}};
		}
		\foreach \i in {1,2,3,5,6,7,8} {
				\draw[thick] (0.75*\i-0.375,\H-1em) -- (0.75*\i-0.375,\H+1em);
		}
		\brac{b5.north west}{b5.north east}{\strut\texttt{ced}}{context};
		\brac{b3.north west}{b4.north east}{\strut\texttt{ce}}{context};

		\pgfextracty{\H}{\pgfpointanchor{a0}{south}}
		\tikzmath{\H=\H+1em-\Hskip;}

		\node[textnode] (t) at (-7.5pt,\H) {
			The next group again contains only one element, namely 1 with parent
			0. Thus, 0 is put into a new group with context \texttt{aced}.
			Following that, the next group contains 4 and 10, hence their
			parents 0 and 6 are put into new groups with contexts
			\texttt{acedce} and  \texttt{bceece}.
		};

		\pgfextracty{\H}{\pgfpointanchor{t}{south}}
		\tikzmath{\H=\H-\Hskip-1.2em;}
		
		\foreach \c [count=\i from 0] in {12, 0, 6, 4,10,1, 7, 3, 2,5,8,9,11} {
			\ifthenelse{\i=0 \OR \i=2}{
				\node[markednode] (b\i) at (0.75*\i,\H) {};
			}{
				\ifthenelse{\i>2}{
					\node[rectangle, rounded corners, fill=done, minimum size=15pt] at (0.75*\i,\H) {};
				}{}
			}
			\node (a\i) at (0.75*\i,\H){\strut{\c}};
		}
		\foreach \i in {1,2,3,5,6,7,8} { \drawline{\i}; }
		\brac{b2.north west}{b2.north east}{\strut\texttt{bceece}}{context};

		\pgfextracty{\H}{\pgfpointanchor{a0}{south}}
		\tikzmath{\H=\H+1em-\Hskip;}

		\node[textnode] (t) at (-7.5pt,\H) {
			Finally, the only remaining element with a non-root parent is 6
			(with parent 0) in a group with context \texttt{bceece}. Hence,
			0 is put into a Lyndon group with context \texttt{acedcebceece}.
			Afterwards, there is nothing more to do and we obtain the Lyndon grouping from
			\reffig{fig_example_grouping}.
		};
	\end{tikzpicture}
	\caption{
		Refining the initial suffix grouping for $S=abccabccbcc\$$ (see
		\reffig{fig_example_grouping}) into the Lyndon grouping.
		Elements in Lyndon groups are marked gray or green, depending on whether
		they have been processed already. Note that the applied procedure does not entirely
		correspond to our algorithm for Phase I; it only serves to illustrate
		the sorting principle.
	}
	\label{fig_example_phase_I}
\end{figure}

	Formally, the suffix grouping satisfies the following property during Phase I before and after processing a group:
	\begin{property}
		\label{prop_grouping}
		For any $i\in\intervCO{0}{n}$ with children $c_1<\dots<c_k$
		there is $j\in\intervCC{0}{k}$ such that
		\begin{itemize}
			\item $c_1,\dots,c_j$ are in groups that have already been processed,
			\item $c_{j+1},\dots,c_k$ are in groups that have not yet been
				processed, and
			\item the context of the group containing $i$ is
				$S[i]\Ls_{c_1}\dots\Ls_{c_j}$.
		\end{itemize}
		Furthermore, each processed group is Lyndon.
	\end{property}
Additionally and unlike in Baier's original approach, all groups created during our Phase I
are additionally either Lyndon or only contain elements whose Lyndon prefix is different from
the group's context. This has several advantages which are discussed below.
\begin{definition}[Strongly preliminary group]
We call a preliminary group $\G=\ang{g_s,g_e,\abs\alpha}$ \emph{strongly preliminary} if and
only if $\G$ contains only elements whose Lyndon prefix is not
$\alpha$. A preliminary group that is not strongly preliminary is called
\emph{weakly preliminary}.
\end{definition}
\begin{lemma}
	\label{lemma_ls_relation}
	For strings $wu$ and $wv$ over $\Sigma$ with $u\lexlt wu$ and
	$v\lexgt wv$ we have $wu\lexlt wv$.
\end{lemma}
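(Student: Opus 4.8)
The plan is to reduce the claim to a comparison between $u$ and $v$ alone, and then to compare both of them against a single sufficiently long reference string, namely a power $w^k$ of $w$. First I would observe that $wu$ and $wv$ share the common prefix $w$, so that $wu\lexlt wv$ holds if and only if $u\lexlt v$; it therefore suffices to prove $u\lexlt v$. (If $w=\varepsilon$ the hypothesis $u\lexlt wu$ reads $u\lexlt u$ and is never satisfied, so the statement is vacuous and we may assume $w\neq\varepsilon$.)

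The key observation is that prepending a fixed string preserves strict lexicographic order: $x\lexlt y$ implies $wx\lexlt wy$. Applying this repeatedly to the hypotheses $u\lexlt wu$ and $wv\lexlt v$ yields $u\lexlt w^k u$ and $w^k v\lexlt v$ for every $k\geq 1$. I would then fix $k$ large enough that $\abs{w^k}=k\abs w>\max\set{\abs u,\abs v}$ and set $P=w^k$, so that $u\lexlt Pu$, $Pv\lexlt v$, and $P$ is strictly longer than both $u$ and $v$.

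From here the goal is to show $u\lexlt P\lexlt v$, whence $u\lexlt v$ by transitivity. Because $\abs P>\abs u$, the first $\abs u$ characters of $Pu$ are exactly $P\intervCO{0}{\abs u}$, so the comparison of $u$ with $Pu$ is decided within the first $\abs u$ positions and coincides with the comparison of $u$ with $P$; hence $u\lexlt Pu$ forces $u\lexlt P$ (either $u$ is a proper prefix of $P$, or $u$ and $P$ first differ at a position where $u$ is smaller). Symmetrically, since $\abs P>\abs v$, the string $v$ cannot be a prefix of $Pv$, so $Pv\lexlt v$ forces the first mismatch of $v$ with $P$ to occur at some $q<\abs v$ with $P[q]<v[q]$, i.e.\ $P\lexlt v$. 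Combining gives $u\lexlt P\lexlt v$, and prepending $w$ (or equivalently the initial reduction) finishes the proof.

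The main thing to be careful about is the length bookkeeping together with the prefix cases: one must verify that the two deductions ``$u\lexlt Pu\Rightarrow u\lexlt P$'' and ``$Pv\lexlt v\Rightarrow P\lexlt v$'' really hold, including the boundary situations where one string is a prefix of the other (for instance, $v$ a proper prefix of $P$ must still yield $Pv\lexgt v$, consistently ruling out $P\lexgt v$). Choosing $\abs P>\max\set{\abs u,\abs v}$ is precisely what guarantees that every relevant first mismatch lies strictly inside both strings, so no genuinely new case analysis on how $w$, $u$ and $v$ overlap is required; this uniform reference to $P=w^k$ is what replaces a far more delicate direct comparison of $u$ and $v$ and is the crux of the argument.
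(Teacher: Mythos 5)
Your proof is correct, and it takes a genuinely different route from the paper's. The paper works directly on $wu$ versus $wv$: from $v\lexgt wv$ it extracts a decomposition $wv=w^kw\intervCO{0}{\ell}bm$ with $b>w[\ell]$ (the first deviation of $wv$ from the periodic continuation of $w$ is upward), from $u\lexlt wu$ it takes $wu$ to be either a power $w^j$ or of the form $w^iw\intervCO{0}{j}aq$ with $a<w[j]$ (downward deviation), and it then locates the first mismatch between $wu$ and $wv$ by hand in each of the four resulting subcases. You instead cancel the common prefix $w$, iterate the hypotheses to get $u\lexlt w^ku$ and $w^kv\lexlt v$ for every $k\geq 1$, and sandwich both strings around a single reference $P=w^k$ with $k\abs w>\max\set{\abs u,\abs v}$, concluding $u\lexlt P\lexlt v$, hence $u\lexlt v$ and $wu\lexlt wv$. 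Both arguments ultimately compare everything against powers of $w$, but yours replaces the structural case analysis by elementary order-theoretic steps (prepending a fixed string preserves strict order; transitivity; two short-versus-long comparisons whose prefix boundary cases you verify explicitly). This buys rigour and uniformity: the paper's decompositions are asserted rather tersely --- deriving them from the hypotheses is itself a small argument --- and its first case literally covers only $wu=w^j$, omitting inputs such as $w=\texttt{ab}$, $u=\texttt{a}$, where $wu=\texttt{aba}$ satisfies $u\lexlt wu$ and is a proper prefix of a power of $w$ without being a power (that subcase goes through by the same prefix argument, but it is not stated); your choice $\abs P>\max\set{\abs u,\abs v}$ absorbs all such boundary situations without any case distinction. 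What the paper's approach buys in exchange is constructive information: it pinpoints the exact index of the first mismatch between $wu$ and $wv$, which a sandwich-plus-transitivity argument does not provide.
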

\begin{proof}
	Note that there is no $j\geq1$ such that $wv=w^j$, since otherwise $v$ would be a
	prefix of $wv$ and thus $v\lexlt wv$.
	Hence, there are $k\in\mathbb{N}, \ell\in\intervCO0{\abs{w}}, b\in\Sigma$ and
	$m\in\Sigma^*$ such that $wv = w^kw\intervCO0{\ell}bm$ and $b > w[\ell]$.
	There are two cases:
	\begin{itemize}
		\item There is some $j\geq1$ such that $wu=w^j$.
			\begin{itemize}
				\item If $j\abs w \leq k\abs w + \ell$, then $wu$ is a prefix of
					$wv$.
				\item Otherwise, the first different symbol in $wu$ and $wv$ is
					at index $p = k\abs w+\ell$ and we have
					$(wu)[p] = w^j[p] = w[\ell] < b = (wv)[p]$.
			\end{itemize}
		\item There are $i\in\mathbb{N}, j\in\intervCO0{\abs{w}}, a\in\Sigma$
			and $q\in\Sigma^*$ such that $wu = w^iw\intervCO{0}{j}aq$ and $a < w[j]$.
			\begin{itemize}
				\item If $|w^iw\intervCO{0}{j}| \leq |w^kw\intervCO{0}{\ell}|$, the first
					different symbol is at index $p = |w^iw\intervCO{0}{j}|$ with
					$(wu)[p] = a < w[j] \leq (wv)[p]$.
				\item Otherwise, the first different symbol is at index $p =
					|w^kw\intervCO{0}{\ell}|$ with $(wv)[p] = b > w[\ell] = (wu)[p]$.
			\end{itemize}
	\end{itemize}
	In all cases, the claim follows.
\end{proof}

\begin{lemma}
	For any weakly preliminary group $\G=\ang{g_s,g_e,\abs{\alpha}}$ 
	there is some $g'\in\intervCO{g_s}{g_e}$ such that 
	$\G'=\ang{g_s,g',\abs{\alpha}}$
	is a Lyndon group and
	$\G''=\ang{g'+1,g_e,\abs{\alpha}}$
	is a strongly preliminary group. Splitting $\G$ into $\G'$ and $\G''$
	results in a valid suffix grouping.
	\label{lemma_split_mildly_prel}
\end{lemma}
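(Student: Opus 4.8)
The plan is to show that, within the lexicographically sorted group $\G$, all elements whose Lyndon prefix equals $\alpha$ occur strictly before all elements whose Lyndon prefix differs from $\alpha$; the split index $g'$ is then simply the boundary between these two contiguous blocks.

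First I would characterise the two kinds of elements. Since $\alpha$ is a Lyndon word and a prefix of $S_i$ for every $i\in\G$, it is itself a Lyndon prefix of $S_i$, so $\abs{\Ls_i}\geq\abs\alpha$ and hence $\nss[i]\geq i+\abs\alpha$. Using $\Ls_i = S\intervCO{i}{\nss[i]}$, this yields the dichotomy: $\Ls_i = \alpha$ exactly when $\nss[i] = i+\abs\alpha$, i.e.\ when $S_{i+\abs\alpha}\lexlt S_i$, and $\Ls_i\neq\alpha$ exactly when $\nss[i] > i+\abs\alpha$, i.e.\ when $S_{i+\abs\alpha}\lexgt S_i$. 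Writing $S_i = \alpha S_{i+\abs\alpha}$, the first condition reads $S_{i+\abs\alpha}\lexlt\alpha S_{i+\abs\alpha}$ and the second $S_{i+\abs\alpha}\lexgt\alpha S_{i+\abs\alpha}$.

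Next I would invoke Lemma \ref{lemma_ls_relation} with $w=\alpha$. For $i,j\in\G$ with $\Ls_i=\alpha$ and $\Ls_j\neq\alpha$, setting $u = S_{i+\abs\alpha}$ and $v = S_{j+\abs\alpha}$, the two conditions above are precisely $u\lexlt wu$ and $v\lexgt wv$, so the lemma gives $wu\lexlt wv$, i.e.\ $S_i\lexlt S_j$. Thus every element with Lyndon prefix $\alpha$ precedes every element with a different Lyndon prefix, so in $\SA\intervCC{g_s}{g_e}$ the $\alpha$-elements form an initial block and the remaining elements a final block.

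Finally I would set $g'$ to the last index of the initial block. Because $\G$ is weakly preliminary it contains at least one element with Lyndon prefix $\alpha$ (otherwise it would be strongly preliminary) and at least one element whose Lyndon prefix is not $\alpha$ (otherwise it would be Lyndon), so both blocks are nonempty and $g'\in\intervCO{g_s}{g_e}$. Then $\G'=\ang{g_s,g',\abs\alpha}$ contains exactly the elements with Lyndon prefix $\alpha$, making it a Lyndon group, while $\G''=\ang{g'+1,g_e,\abs\alpha}$ contains exactly those with a different Lyndon prefix, making it strongly preliminary; both inherit the shared prefix $\alpha$ and the Lyndon property of $\alpha$ from $\G$, and since their ranges partition $\intervCC{g_s}{g_e}$ contiguously the resulting grouping is valid. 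The only genuinely delicate step is the characterisation in the second paragraph --- in particular pinning down that $\abs{\Ls_i}\geq\abs\alpha$ forces $\nss[i]$ to sit at or beyond $i+\abs\alpha$, so that the single comparison of $S_{i+\abs\alpha}$ with $S_i$ cleanly decides whether $\Ls_i=\alpha$; once this is in place the application of Lemma \ref{lemma_ls_relation} is immediate.
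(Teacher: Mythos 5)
Your proof is correct and takes essentially the same route as the paper's: both split $\G$ at the boundary between the elements whose Lyndon prefix is $\alpha$ and the rest, using Lemma \ref{lemma_ls_relation} with $w=\alpha$ to show that every element of the first kind lexicographically precedes every element of the second kind. The only difference is that you spell out what the paper leaves implicit, namely the verification of the lemma's hypotheses (via $\nss[i]\geq i+\abs{\alpha}$ and the dichotomy $S_{i+\abs{\alpha}}\lexlt S_i$ versus $S_{i+\abs{\alpha}}\lexgt S_i$) and the nonemptiness of both blocks, which follows from $\G$ being preliminary but not strongly preliminary.
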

\begin{proof}
	Let $\G=\ang{g_s,g_e,\abs\alpha}$ be a weakly preliminary group. Let
	$F\subset\G$ be the set of elements from $\G$ whose Lyndon
	prefix is $\alpha$. By Lemma \ref{lemma_ls_relation} we have $S_i \lexlt
	S_j$ for any $i\in F, j\in \G\setminus F$. Hence, splitting $\G$ into two
	groups $\G'=\ang{g_s,g_s+\abs{F}-1,\abs\alpha}$ and $\G''=\ang{g_s+\abs{F},
	g_e, \abs\alpha}$ results in a valid suffix grouping. Note that, by
	construction, the former is a Lyndon group and the latter is strongly preliminary.
\end{proof}
For instance, in \reffig{fig_example_phase_I} there is a group containing
1,4 and 10 with context \texttt{ce}. However, 4 and 10 have this context as Lyndon prefix
while 1 has \texttt{ced}. Consequently, 1 will later be moved to a new group.
Hence, when Baier (and Bertram et al.) create a weakly preliminary group (in
\reffig{fig_example_phase_I} this happens while processing the Lyndon group with context \texttt{e}),
we instead create two groups, the lower containing 4 and 10 and the
higher containing 1.

During Phase I we maintain the suffix grouping using the following data structures:
\begin{itemize}
	\item An array $A$ of length $n$ containing the unprocessed Lyndon groups and
		the sizes of the strongly preliminary groups.
	\item An array $I$ of length $n$ mapping each element $s$ to the start of the
		group containing it. We call $I[s]$ the \emph{group pointer} of $s$.
	\item A list $C$ storing the starts of the already processed Lyndon groups.
\end{itemize}
These data structures are organised as follows.
Let $\G=\ang{g_s,g_e,\abs\alpha}$ be a group.
For each $s\in\G$ we have $I[s] = g_s$.
If $\G$ is Lyndon and has not yet been processed, we also have
$s\in A\intervCC{g_s}{g_e}$ for all $s\in\G$
and $A[g_s]<A[g_s+1] < \dots < A[g_e]$.
If $\G$ is Lyndon and has been processed already, there is some $j$ such that $C[j] = g_s$.
If $\G$ is (strongly) preliminary we have $A[g_s] = g_e + 1 - g_s$
and $A[k] = 0$ for all $k\in\intervOC{g_s}{g_e}$.

In contrast to Baier, we have the Lyndon groups in $A$ sorted and store the sizes of
the strictly preliminary groups in $A$ as well \cite{baier2015linear,baier2016linear}.
The former makes finding the number
of children a parent has in the currently processed group easier and faster.
The latter makes the separate array of length $n$ used by Baier for the
group sizes obsolete \cite{baier2015linear,baier2016linear} and is
made possible by the fact that we only write Lyndon groups to $A$.

As alluded above, we follow Baier's approach and consider the Lyndon groups in lexicographically decreasing
order while updating the groups containing the parents of elements in the current group.

\begin{algorithm}[H]
	$g_e \gets n-1$\;
	\While{$g_e \geq 0$}{
		$g_s\gets I[A[g_e]]$\;
		process group $\ang{g_s,g_e,\bot}$\;
		$g_e\gets g_s-1$\;
	}
	\caption{Phase I: Traversing the groups \cite{baier2015linear,baier2016linear}}
\label{alg_phase1_traversal}
\end{algorithm}
Note that in Algorithm \ref{alg_phase1_traversal}, $g_e$ is always the end of a
Lyndon group. This is due to the fact that a child is by definition
lexicographically greater than its parent. Hence, when a group ends at $g_e$
and all suffixes in $\SA\intervOO{g_e}{n}$ have been processed, the children of
all elements in that group have been processed and it consequently must be
Lyndon. Thus, Algorithm \ref{alg_phase1_traversal} actually results in a Lyndon
grouping.
For a formal proof see \cite{baier2015linear}.

Of course we have to explain how to actually process a Lyndon group. This is
done in the rest of this section.

\newcommand{\As}[0]{\mathcal{A}}

Let $\G = \ang{g_s,g_e,\abs\alpha}$ be the currently processed group
and w.l.o.g.\ assume that no element in $\G$ has the root $-1$ as parent (we do
not have the root in the suffix grouping, thus nodes with the root as parent can
be ignored here).
Furthermore, let $\As$ be the set of parents of elements in $\G$ (i.e.\ $\As =
\set{\pss[i] : i\in\G, \pss[i] \geq 0}$)
and let $\G_1<\dots<\G_k$ be those (necessarily
preliminary) groups containing elements from $\As$.
For each $g\in\intervCC{1}{k}$ let $\alpha_g$ be the context of $\G_g$.

As noted in \reffig{fig_example_phase_I}, we have to consider the number of
children an element in $\As$ has in $\G$. Namely, if a node has multiple
children with the same Lyndon prefix, of course all of them contribute to its
Lyndon prefix. This means that we need to move two parents in $\As$, which are
currently in the same group, to different new groups if they have differing
numbers of children in $\G$.

Let $\As_\ell$ contain those elements from
$\As$ with exactly $\ell$ children in $\G$.
Maintaining Property \ref{prop_grouping} requires that,
after processing $\G$, for some $g\in\intervCC1k$ the elements in
$\As_\ell\cap\G_g$
are in groups with context $\alpha_g\alpha^\ell$.
Note that, for any $\ell<\ell'$, we have
$\alpha_g\alpha^\ell \lexlt \alpha_g\alpha^{\ell'}$.
Consequently, the elements in $\As_\ell\cap\G_g$ must form a lower group than
those in $\As_{\ell'}\cap\G_g$ after $\G$ has been processed
\cite{baier2015linear,baier2016linear}.
To achieve this, first the parents in
$\As_{\abs{\G}}$ are moved to new groups, then those in
$\As_{\abs{\G}-1}$ and so on \cite{baier2015linear,baier2016linear}.

We proceed as follows.
First, determine $\As$ and count how many children each parent has in
$\G$.
Then, sort the parents according to these counts using a bucket sort. Because
the elements of yet unprocessed Lyndon groups must be sorted in $A$, this sort must be stable.
Further, partition each bucket into two sub-buckets, one containing the elements
that should be inserted into Lyndon groups and the other containing those that
Then, for the sub-buckets (in the order of decreasing count; for equal counts: first
strongly preliminary then Lyndon sub-buckets) move the parents into new
groups.\footnote{
	Note that Baier broadly follows the same steps (determine parents, sort them,
	move them to new groups accordingly) \cite{baier2015linear,baier2016linear}.
	However, each individual step is different because of our distinction
	between strongly preliminary, weakly preliminary and Lyndon groups.
}
These steps will now be described in detail.

For brevity, we refer to those elements in $\As$ which have their last
child in $\G$ as \emph{finalists}. Partition $\As_\ell$ into $F_\ell$ and
$N_\ell$, such that the former contains finalists and the latter the
non-finalists.

\newcommand{\key}[1]{\mathit{key}\left(#1\right)}
In order to determine the aforementioned sub-buckets, we
associate a \emph{key} with each element in $\As$
such that (stably) sorting according to these keys yields the desired partitioning.
Specifically, for a fixed $\ell$, let $\key{s} = 2\ell$ for each $s\in F_\ell$ and 
$\key{s} = 2\ell + 1$ for each $s\in N_\ell$.

As we need to sort stably, the bucket sort requires an additional array $B$ of
length $\abs{\G}$, and another array
for the bucket counters.

Finding parents is done using the same $\pss$ array as in Phase II. Since
$A\intervCC{g_s}{g_e}$ is sorted
by increasing index, children of the same parent are in a contiguous part of
$A\intervCC{g_s}{g_e}$. Hence, we determine $\As$ and the keys
within one scan
over $A\intervCC{g_s}{g_e}$. Since in practice most elements have no sibling in
the same Lyndon group, we treat those explicitly.
Specifically, we move $F_1$ to $A\intervCO{g_s}{g_s +
\abs{F_1}}$ and
$N_1$ to $B\intervOC{\abs{\G} - \abs{N_1}}{\abs{\G}}$.
Parents with keys larger than two are written with their keys interspersed to
$B\intervCC{1}{2(\abs{\As} - \abs{\As_1})}$. Interspersing the
keys is done to improve the cache-locality and thus performance.

Then we copy $N_1$ to $A\intervCO{g_s+\abs{F_1}}{g_s+\abs{\As_1}}$. Note
that $\As_1$ is now correctly sorted in $A\intervCO{g_s}{g_s + \abs{\As_1}}$.
Then we sort $\As\setminus\As_1$ by the keys.
That is, we count the frequency of each key, determine the end of each key's
bucket and insert the elements into the correct bucket in
$A\intervCO{g_s+\abs{\As_1}}{g_s+\abs{\As}}$.
\begin{figure}[t]
	\centering

	\newcommand{\brac}[4]{\draw[decorate,decoration={brace,amplitude=10pt}] (#1) -- (#2) node[midway,yshift=#3] {#4}}
	\tikzset{rect/.style={rectangle,draw,minimum height=7mm,text depth=0pt}}
	
	\scalebox{0.75}{
	\begin{tikzpicture}
		\node[rect] (Gs) at (0,0) { $F^1$ };
		\tikzmath{coordinate \V;
			\V = (Gs.east) - (Gs.west);
			\remdist = 20em - abs((\Vx)+(\Vy));
		}
		\node[rect,fill=gray,minimum width=\remdist] (Gr) [right=0pt of Gs] {};
		\brac{Gs.north west}{Gr.north east}{+1.75em}{$A\intervCC{g_s}{g_e}$};
		
		\node[rect] (tmpP1) [right=5pt of Gr] {$p_1$};
		\node[rect] (tmpK1) [right=0pt of tmpP1] {$k_1$};
		\node[rect] (tmpD)  [right=0pt of tmpK1] {$\dots$};
		\node[rect] (tmpPl) [right=0pt of tmpD]  {$p_m$};
		\node[rect] (tmpKl) [right=0pt of tmpPl] {$k_m$};
		\tikzmath{coordinate \V;
			\V = (tmpKl.east) - (tmpP1.west);
			\remdist = 20em - 5em - abs((\Vx)+(\Vy));
		}
		\node[rect,fill=gray,minimum width=\remdist] (tmpFill) [right=0pt of tmpKl] {};
		\node[rect,minimum width=5em] (tmpnlc) [right=0pt of tmpFill] {$N^1$};

		\brac{tmpP1.north west}{tmpnlc.north east}{+1.75em}{$B$};


		\node[rect] (Gs) at (0,-1) { $F^1$ };
		\node[rect,minimum width=5em] (Gsl) [right=0pt of Gs] { $N^1$ };
		\tikzmath{coordinate \V;
			\V = (Gsl.east) - (Gs.west);
			\remdist = 20em - abs((\Vx)+(\Vy));
		}
		\node[rect,fill=gray,minimum width=\remdist] (Gr) [right=0pt of Gsl] {};
		
		\node[rect] (tmpP1) [right=5pt of Gr] {$p_1$};
		\node[rect] (tmpK1) [right=0pt of tmpP1] {$k_1$};
		\node[rect] (tmpD)  [right=0pt of tmpK1] {$\dots$};
		\node[rect] (tmpPl) [right=0pt of tmpD]  {$p_m$};
		\node[rect] (tmpKl) [right=0pt of tmpPl] {$k_m$};
		\tikzmath{coordinate \V;
			\V = (tmpKl.east) - (tmpP1.west);
			\remdist = 20em - abs((\Vx)+(\Vy));
		}
		\node[rect,fill=gray,minimum width=\remdist] (tmpFill) [right=0pt of tmpKl] {};



		\node[rect] (Gs) at (0,-2) { $F^1$ };
		\node[rect,minimum width=5em] (Gsl) [right=0pt of Gs] { $N^1$ };
		\node[rect] (Gdots) [right=0pt of Gsl] { $\dots$ };
		\node[rect] (Gk) [right=0pt of Gdots] { $F^{\abs{\G}}$ };
		\node[rect] (Gsk) [right=0pt of Gk] { $N^{\abs{\G}}$ };
		\tikzmath{coordinate \V;
			\V = (Gsk.east) - (Gs.west);
			\remdist = 20em - abs((\Vx)+(\Vy));
		}
		\node[rect,fill=gray,minimum width=\remdist] (Gr) [right=0pt of Gsk] {};
		
		\node[rect,fill=gray,minimum width=20em] (tmpFill) [right=5pt of Gr] {};
	\end{tikzpicture}
	}
	
	\caption{
		Shown is the memory layout during the bucket sort that is applied during
		the processing of a Lyndon group. The data in grey areas is irrelevant.
		$p_1<\dots<p_m$ are the elements in $\mathcal{P}\setminus\mathcal{P}^1$
		and $k_i = \key{p_i}$.
	}
	\label{fig_mem_layout}
\end{figure}
\reffig{fig_mem_layout} shows how the data is organised during the
sorting.

\paragraph{Reordering parents into Lyndon groups}
Let $A'$ be a sub-bucket of length $k$ containing only parents which will now be moved
to a Lyndon group and whose context is extended by $\alpha^q$ for some fixed $q$.
Note that the bucket sort ensures that $A'$ is sorted.
Within each current preliminary group $\G'=\ang{g_s,g_e,\abs\beta}$,
the elements in $A'$ must be
moved to a new Lyndon group following $\G'\setminus A'$.
For each element $s$ in $A'$, we decrement $A[I[s]]$ (i.e. the size of the
group currently containing $s$) and write $s$ to $A[I[s] + A[I[s]]]$.
Afterwards, the new group start must be set (iff $\G'$ is now not
empty) to $I[s] + A[I[s]]$ (the start of the old group plus the remaining size of the old group).
To determine whether $\G'$ is now not empty, we mark inserted
elements in $A$ using the MSB. If $A[I[s]]$ has the MSB set, we do not need to change the group
pointer $I[s]$.

\paragraph{Reordering parents into strongly preliminary groups}
Let $A'$ be a sub-bucket of length $k$ containing only parents which are now moved
to a strongly preliminary group.
The general procedure is similar to the reordering into Lyndon groups, but simpler. First, we decrement
the sizes of the old groups. In a second scan over $A'$, we
set the new group pointer as above, and in a third scan we increment the
sizes of the new groups.

Note that in the reordering step, we iterate two and three times, respectively,
over the elements in a sub-bucket and that in each scan the group pointers are
required.
Furthermore, the group pointers are updated only in the last scan.
As the group pointers are generally scattered in memory, it would be inefficient
to fetch them individually in each scan for two reasons.
Firstly, a single group pointer could occupy an entire cache-line (i.e.\ we mostly keep and transfer
irrelevant data in the cache).
Secondly, the memory accesses are unnecessarily unpredictable.
In order to mitigate these problems,
we pull the group pointers into the temporary array $B$, that is, we set $B[i]
\gets I[A[i + g_s]]$ for each $i\in\intervCO{0}{\abs\As}$.
Of course, in this fetching of group pointers we have the same problems as
before, but during the actual reordering the group pointers can be accessed much
more efficiently.

Note that in contrast to \cite{baier2015linear}, we do not compute the parents
on the fly during
Phase I but instead use the very fast algorithm by Bille et al. \cite{bille2019space}
to compute $\pss$ in advance and then mark the last children. There are two reasons for this, namely
determining the parents on the fly as done in \cite{baier2015linear} requires a kind of
pointer jumping that is
very cache unfriendly and hence slow; and secondly it is not clear how to
efficiently determine on the fly whether a node is the last child of its parent.

Another difference that is speeding up the algorithm is that we only write
Lyndon groups to $A$. This way we do not have to rearrange elements in weakly
preliminary groups when some of their elements are moved to new groups.
Furthermore, it is possible to have the elements in Lyndon groups sorted in $A$
which makes determining the parents and their corresponding keys easier and faster.

\subsection{Initialisation}
\label{subsec_init}
In the initialisation, the $\pss$-array with markings must be
computed. We use an implementation by Bille et al.\ \cite{bille2019space} and
then add the markings in a right-to-left scan over $\pss$ (for $i=n-1\to0$, if
$\pss[\pss[i]]$ is unmarked, mark $\pss[i]$ and $\pss[\pss[i]]$, otherwise
unmark $\pss[i]$). Further, the initial suffix grouping needs to be
constructed.
For each $c\in\Sigma$ let the \emph{leaf-$c$-bucket} be the interval in $\SA$
containing those $i$ with $S[i]=c$ and $S_i \lexgt S_{i+1}$ and let the
\emph{inner-$c$-bucket} be the interval in $\SA$ containing those $i$ with $S[i] =
c$ and $S_i\lexlt S_{i+1}$.
We compute the leaf and inner bucket sizes in a right-to-left scan over
$S$ (determining for some $i$ whether $S_i\lexlt S_{i+1}$ can be done in constant time
during a right-to-left scan \cite{ko2003space,nong2009linear}).
Computing the prefix sums over the bucket sizes yields the bucket boundaries.
In a second right-to-left scan over $S$
the references (in the array $I$) to the group starts (i.e.\ bucket starts) are set and
the leaves are inserted into the respective leaf-buckets.

\section{Experiments}

\label{sec_experiments}
Our implementation \FGSACA{} of the optimised \GSACA{} is publicly
available.\footnote{\url{https://gitlab.com/qwerzuiop/lfgsaca}}

We compare our algorithm with
the \GSACA{} implementation by Baier \cite{baier2015linear,baier2016linear},
and the \texttt{double sort} algorithms \DS{} and
\DSH{} by Bertram et al.\ \cite{bertram_et_al_2015}. The latter two also use the
grouping principle but employ integer sorting and have super-linear time complexity.
\DSH{} differs from \DS{} only in the initialisation: in \DS{} the
suffixes are sorted by their first character while
in \DSH{} up to 8 characters are considered.
	We further include \DIVSUFSORT{} 2.0.2 and \LIBSAIS{} 2.7.1
	since the former is used by Bertram et al.\ as a reference
	\cite{bertram_et_al_2015} and the latter is the currently fastest suffix
	sorter known to us.

All experiments were conducted on a Linux-5.4.0 machine with an AMD EPYC 7742
processor and 256GB of RAM. All SACAs were compiled with GCC 10.3.0
with flags \texttt{-O3 -funroll-loops -march=native -DNDEBUG}.
Each algorithm was executed five times on each test case and we use the mean as
the final result.

We evaluated the algorithms on data from the Pizza \& Chili 
corpus.\footnote{\url{http://pizzachili.dcc.uchile.cl/}}
From the set of real
texts (in the following \texttt{PC-Real}) we included
\texttt{english} (1GiB),
\texttt{dna} (386MiB),
\texttt{sources} (202MiB),
\texttt{proteins} ($1.2$GiB) and
\texttt{dblp.xml} (283MiB).
From the set of real repetitive texts (\texttt{PC-Rep-Real}) we included
\texttt{cere} (440MiB),
\texttt{einstein.en.txt} (446MiB),
\texttt{kernel} (247MiB) and
\texttt{para} (410MiB).
Furthermore, from the artificial repetitive texts (\texttt{PC-Rep-Art}) we
included \linebreak
\texttt{fib41} (256MiB),
\texttt{rs.13} (207MiB) and
\texttt{tm29} (256MiB).

In order to test the algorithms on large inputs for which 32-bit integers
are not sufficient, we also use a dataset containing larger texts
(\texttt{Large}), namely
the first $10^{10}$ bytes of the English Wikipedia dump from
01.06.2022\footnote{
	\url{https://dumps.wikimedia.org/enwiki/20220601/}
}
(9.4GiB) 
and the human DNA concatenated with itself\footnote{
	\url{https://www.ncbi.nlm.nih.gov/assembly/GCF\_000001405.40}
}
(6.3GiB).

For each of the datasets and algorithms, we determined the average
time and memory used, relative to the
input size. The results are shown in \reffig{fig_times}.

All algorithms were faster on the more repetitive datasets, on which the
differences between the algorithms were also smaller.
On all datasets, our algorithm is between $46\%$ and $60\%$ faster than \GSACA{}
and compared to \DSH{} about $2\%$ faster on repetitive data, over $11\%$ faster
on \texttt{PC-Real} and over $13\%$ faster on \texttt{Large}.

Especially notable is the difference in the time required for Phase II: Our Phase
II is between $33\%$ and $50\%$ faster than Phase II of \DSH{}.
Our Phase I is also faster than Phase I of \DS{} by a similar margin.
Conversely, Phase I of \DSH{} is much faster than our Phase I. However,
this is only due to the more elaborate construction of the
initial suffix grouping as demonstrated by the much slower Phase I of \DS{}.
Our initialisation requires constructing and marking the $\pss$-array and is
thus much slower than the initialisation of \GSACA{} and \DS{}. (Note that
in \GSACA{} the unmarked $\pss$-array is also computed, but on the fly during
Phase I.) For \DSH{}, the time
required for the initialisation is much more dependent on the dataset, in
particular it seems to be faster on repetitive data.
	Compared to \FGSACA{}, \LIBSAIS{} is 46\%, 34\%, 31\% and 3\% faster on \texttt{Large},
	\texttt{PC-Real}, \texttt{PC-Rep-Real} and \texttt{PC-Rep-Art},
	respectively.

Memory-wise, for 32-bit words, \FGSACA{} uses about $8.83$ bytes per input character, while \DS{}
and \DSH{} use $8.94$ and $8.05$ bytes/character, respectively.
\GSACA{} always uses 12 bytes/character.
On \texttt{Large}, \FGSACA{} expectedly requires about twice as much memory. For
\DS{} and \DSH{} this is not the case, mostly because they use 40-bit integers for
the additional array of length $n$ that they require (while we use 64-bit
integers).
\DIVSUFSORT{} requires only a small constant amount of working memory
and \LIBSAIS{} never exceeded 21kiB of working memory on our test data.

\input{eval_fig}

\section{Concluding Remarks}
\label{sec_discussion}

We only considered single threaded suffix array construction. As modern
computers gain their processing power more and more through parallelism, it may
be worthwhile to spend effort on trying to parallelise our algorithm. For
instance, while processing a final group, all steps besides the reordering of
the parents are entirely independent of other groups.

Implementation-wise, in the case that 32-bit integers are not sufficient it
may be worthwhile to use 40-bit integers instead of 64-bit integers for our
auxiliary data structures.

Also, the results for \DSH{} and \DS{} indicate that it may be useful to
use an already refined suffix grouping as a starting point for our Phase I, as
this enables us to skip many refinement steps.


\subsubsection{Acknowledgements}
This work was supported by the Deutsche Forschungsgemeinschaft (DFG – German Research Foundation)
(OH 53/7-1).
%
%
%
\bibliographystyle{splncs04}
\bibliography{bib}

\appendix

\end{document}